\theoremstyle{plain}
\newtheorem{theorem}{Theorem}
\newtheorem{proposition}[theorem]{Proposition}
\theoremstyle{definition}
\theoremstyle{remark}
\newtheorem*{remark*}{Remark}
\newcommand{\scenS}{\mathbf{S}}
\newcommand{\scenT}{\mathbf{T}}
\newcommand{\scenR}{\mathbf{R}}
\newcommand{\C}{\mathcal{C}}
\newcommand{\CS}{\C_{\scenS}}
\newcommand{\CT}{\C_{\scenT}}
\newcommand{\X}{\mathcal{X}}
\newcommand{\XS}{\X_{\scenS}}
\newcommand{\OO}[1]{\mathcal{O}_{#1}}
\newcommand{\OS}[1]{\mathcal{O}_{\scenS,#1}}
\newcommand{\OT}[1]{\mathcal{O}_{\scenT,#1}}
\newcommand{\OX}{\OO{\X}}
\newcommand{\OA}{\OO{A}}
\newcommand{\OSA}{\OS{A}}
\newcommand{\OTA}{\OT{A}}
\newcommand{\old}{\mathsf{old}}
\newcommand{\new}{\mathsf{new}}
\newcommand{\mc}{\mathsf{mc}}
\newcommand{\pc}{\mathsf{pc}}
\newcommand{\oldA}{\mathsf{old}{\scaleobj{0.8}{{+}}}\!A}
\newcommand{\mcA}{\mathsf{mc}{\scaleobj{0.8}{{+}}}\!A}
\newcommand{\pcA}{\mathsf{pc}{\scaleobj{0.8}{{+}}}\!A}
\newcommand{\Cold}{\mathcal{C}_{\old}}
\newcommand{\Cnew}{\mathcal{C}_{\new}}
\newcommand{\ColdA}{\mathcal{C}_{\oldA}}
\newcommand{\CmcA}{\mathcal{C}_{\mcA}}
\newcommand{\Cmc}{\mathcal{C}_{\mc}}
\newcommand{\CpcA}{\mathcal{C}_{\pcA}}
\newcommand{\vold}{v_{\old}}
\newcommand{\vmc}{v_{\mc}}
\newcommand{\vmcA}{v_{\mcA}}
\newcommand{\vpcA}{v_{\pcA}}
\newcommand{\Vold}{V_{\old}}
\newcommand{\VmcA}{V_{\mcA}}
\newcommand{\qold}{q_{\old}} 
\newcommand{\qmcA}{q_{\mcA}} 
\newcommand{\qpcA}{q_{\pcA}} 
\newcommand{\pold}{p_{\old}}
\newcommand{\pmc}{p_{\mc}}
\newcommand{\ICHSH}{I_{\mathsf{CHSH}}}
\newcommand{\LNC}{L_{N\!C}}
\newcommand{\LQ}{L_{Q}}
\newcommand{\LND}{L_{N\!D}}
\newcommand{\lambdachange}[2]{\lambda[#1 \mapsto #2]}
\newcommand{\ItrAZ}{{I, \tr(A),0}}
\newcommand{\Zakak}{{0, a_k, a_k}}
\newcommand{\bItrAZ}{b_{\ItrAZ}}
\newcommand{\bZakak}{b_{\Zakak}}
\begin{document}


\title{Lifting noncontextuality inequalities}


\newcommand{\inllong}{INL -- International Iberian Nanotechnology Laboratory, Av.~Mestre Jos\'e Veiga s/n, 4715-330 Braga, Portugal}
\newcommand{\inlshort}{INL -- International Iberian Nanotechnology Laboratory, Braga, Portugal}
\newcommand{\haslablong}{HASLab, INESC TEC, Universidade do Minho, Departamento de Informática, Campus de Gualtar, 4710-057 Braga, Portugal}
\newcommand{\haslabshort}{HASLab, INESC TEC, Universidade do Minho, Braga, Portugal}
\newcommand{\diumlong}{Departamento de Inform\'atica, Universidade do Minho, Campus de Gualtar, 4710-057 Braga, Portugal}
\newcommand{\diumshort}{Departamento de Inform\'atica, Universidade do Minho, Braga, Portugal}

\newcommand{\dfaIIlong}{Departamento de F\'{\i}sica Aplicada II, Universidad de Sevilla, ETS Arquitectura, Avda. Reina Mercedes 2, 41012 Sevilla, Spain}
\newcommand{\dfaIIshort}{Departamento de F\'{\i}sica Aplicada II, Universidad de Sevilla, Sevilla, Spain}
\newcommand{\icIlong}{Instituto Carlos~I de F\'{\i}sica Te\'orica y Computacional, Universidad de Sevilla, Facultad de Física, Av. Reina Mercedes s/n, 41012 Sevilla, Spain}
\newcommand{\icIshort}{Instituto Carlos~I de F\'{\i}sica Te\'orica y Computacional, Universidad de Sevilla, Sevilla, Spain}
 
\author{Raman Choudhary}
\affiliation{\inlshort}
\affiliation{\haslabshort}
\affiliation{\diumshort}

\author{Rui Soares Barbosa}
\affiliation{\inlshort}

\author{Ad\'an Cabello}
\affiliation{\dfaIIshort}
\affiliation{\icIshort}


\begin{abstract}
Kochen--Specker contextuality is a fundamental feature of quantum mechanics and a crucial resource for quantum computational advantage and reduction of communication complexity.
Its presence is witnessed in empirical data by the violation of noncontextuality inequalities.
However, all known noncontextuality inequalities corresponding to facets of noncontextual polytopes are either Bell inequalities or refer to cyclic or state-independent contextuality scenarios. 
We introduce a general method for lifting noncontextuality inequalities, deriving facets of noncontextual polytopes for more complex scenarios from known facets of simpler sub-scenarios.
Concretely, starting from an arbitrary scenario, the addition of a new measurement or a new outcome preserves the facet-defining nature of any noncontextuality inequality.
This extends the results of Pironio [J. Math. Phys. \textbf{46}, 062112 (2005)] from Bell nonlocality scenarios to contextuality scenarios, unifying liftings of Bell and noncontextuality inequalities.
Our method produces facet-defining noncontextuality inequalities in all scenarios with contextual correlations, and we present examples of facet-defining noncontextuality inequalities for scenarios where no examples were known.
Our results shed light on the structure of noncontextuality polytopes and the relationship between such polytopes across different scenarios.
\end{abstract}


\maketitle


\section{Introduction}


\subsection{Motivation}


Kochen--Specker (KS) contextuality \cite{Kochen:1967JMM,Budroni:2022RMP}, i.e.\ the impossibility of explaining with a single global probability distribution the marginal probability distributions produced by either ideal measurements of compatible observables \cite{Klyachko:2008PRL,Cabello:2008PRL} or by measurements on spatially separated systems \cite{Bell:1964PHY,Clauser:1969PRL}, is a characteristic signature of quantum mechanics.
More recently, it has been shown that contextuality is a crucial resource for universality in some models of quantum computation \cite{Howard:2014NAT,Raussendorf:2013PRA}, non-oracular quantum computational advantage \cite{Bravyi:2018SCI}, quantum communication complexity advantage \cite{Gupta:2023PRL}, and secure communication \cite{Horodecki:2010XXX,Zhen:2023PRL}.

Contextuality is typically witnessed by the violation of linear constraints called noncontextuality (NC) inequalities.
Indeed, the set of noncontextual correlations for a given measurement scenario forms a convex polytope.
A minimal set of conditions for deciding whether measurement statistics are contextual is provided by the set of inequalities that support its facets.
However, all known facet-defining NC inequalities
are either for Bell scenarios \cite{Clauser:1969PRL} or refer to cyclic \cite{Klyachko:2008PRL} or state-independent \cite{Cabello:2008PRL,Badziag:2009PRL,Yu:2012PRL} contextuality scenarios.

Moreover, characterising all the facet-defining inequalities of such polytopes
is a notoriously hard problem (NP-complete \cite{Avis_2005}) to solve in general.
As such, there are few fully characterised scenarios, for which all facet-defining NC inequalities are known:
non-Bell scenarios include two-outcome $k$-cycle ($k \geq 5$) scenarios \cite{PhysRevA.88.022118},
Bell scenarios include the $(2,2,2)$ CHSH scenario \cite{Clauser:1969PRL}, and various classes generalising it such as $(2,m,2)$ \cite{PhysRevD.2.1418,BraunsteinAP1990}, $(n,2,2)$ \cite{PhysRevD.35.3066}, and $(2,2,k)$ \cite{Collins:PRL2002} scenarios,
where $(n,m,k)$ stands for $n$ parties, each with $m$ measurement settings, each with $k$ possible outcomes. 
Furthermore, there are many Bell scenarios for which a partial characterisation of their Bell inequalities has been carried out.

Despite the demoralising hardness of characterising arbitrary scenarios, some non-trivial work has been done linking simpler Bell scenarios with more complex ones.
In Ref.~\cite{pironio2005lifting}, Pironio proposed a method to derive (some of the) facet-defining Bell inequalities of complex Bell polytopes
starting from known inequalities of simpler Bell polytopes.
It employed the idea of \emph{lifting}, a commonly used technique in convex polyhedral theory to derive facet-defining inequalities of a polytope in $\mathbb{R}^n$ from facet-defining inequalities of a related polytope in $\mathbb{R}^{m}$ where $m < n$.
The upshot is that once the facets of a simpler polytope have been fully or partially identified,
one need not start from scratch when searching for the facets of a more complex polytope.
One may instead concentrate efforts on finding the facet-defining inequalities that are absent in or do not arise from simpler scenarios.

Aiming to foster facet characterisation for general Bell scenarios,
Pironio showed that any facet-defining inequality of an arbitrary Bell polytope can be lifted to one or more facet-defining inequalities of any more complex Bell polytope, where by `more complex' we mean a Bell scenario with more parties, more local measurements for a party, or more outcomes for a measurement (or a combination of all three) than the original scenario.
Building on this work, in Ref.~\cite{Pironio_2014} Pironio characterised Bell scenarios whose only facets are given by liftings of the CHSH inequality.
These include e.g. the bipartite scenarios where one party has a binary choice of dichotomic measurements, irrespective of the number of measurement settings and outcomes for the other party.

Non-Bell-type contextuality scenarios, on the other hand, have not received as much attention in terms of facet characterisation.
This work aims to address this gap.

\subsection{Contributions}
\label{sec:contributions}

We introduce a method for producing facet-defining NC inequalities in arbitrary KS contextuality scenarios.
This is based on -- and strictly extends -- the lifting techniques used by Pironio \cite{pironio2005lifting} for Bell scenarios.
The method allows us to identify facet-defining NC inequalities for \emph{all} scenarios which admit contextual correlations and thus provides a key to explore an infinity of as-of-yet unexplored scenarios.
This is ensured by Vorob{\textquotesingle}ev's theorem \cite{Vorobyev:1962}, which guarantees that any contextuality-witnessing scenario contains an induced $k$-cycle sub-scenario (for some $k \geq 4$), and by the complete characterisation of the noncontextual polytopes for all such cycle scenarios \cite{PhysRevA.88.022118}.

We now give a concise, high-level summary of our main results,
bearing in mind that the concepts involved will only be properly introduced in later sections. 
A scenario $\scenS$ can be extended to a more complex scenario $\scenT$ by adding more measurements and/or more outcomes.
We focus on one such step at a time:
either adding \emph{one} measurement or adding \emph{one} outcome for an already existing measurement.
An arbitrary extension can be seen as a sequence of such single-step extensions.
We fix an initial facet-defining NC inequality for $\scenS$
which we aim to lift to (one or more) facet-defining NC inequalities for $\scenT$.
We achieve this in both situations, but the specifics differ somewhat.

\paragraph*{Measurement lifting.}
When $\scenT$ is obtained from $\scenS$ by adjoining a new measurement $A$,
the method of lifting depends both
on the compatibility relations between $A$ and the pre-existing measurements
and on the facet-defining inequality being lifted, namely on the set of measurements that \emph{effectively contribute} to that inequality.
We distinguish two cases:
\begin{enumerate}[leftmargin=*,align=left,label=\Roman*.,widest*=2]
   \item
   If $A$ is incompatible with some measurement that contributes to the initial inequality, then the inequality lifts unchanged to a facet-defining inequality for the scenario $\scenT$. The new measurement $A$ is traced out and does not effectively contribute to the lifted inequality, thus it need not be performed for testing the inequality.
  \item
  If $A$ is compatible with all the measurements that contribute to the initial inequality, then the inequality lifts to
  a facet-defining inequality of $\scenT$ for each outcome $a_k$ of $A$. Each such lifted inequality can be tested by first measuring $A$, post-selecting on obtaining the outcome $a_k$, and then testing the initial inequality.
\end{enumerate}

\paragraph*{Outcome lifting.}
When $\scenT$ is obtained from $\scenS$ by adding a new outcome $a_0$ for an already existing measurement $A$,
the original inequality lifts to a facet-defining inequality on $\scenT$ for each
choice of (pre-existing) outcome $a_k \neq a_0$ of $A$. 
The choice indicates the outcome $a_k$ of $A$ with which $a_0$ is to be `clubbed together'.
When testing the inequality, any occurrence of the new outcome $a_0$ for $A$ is treated as if it were an $a_k$ in the original scenario.
An exception is that the initial inequality cannot itself be the result of case II measurement lifting with the \textit{same} choice of outcome $a_k$: intuitively, that would entail post-selecting on two different outcomes for $A$.


\subsection{Structure}


In Sec.~\ref{sec:concepts}, we introduce all the definitions required in the work.
The first three subsections present the framework for studying contextuality: measurement scenarios (\ref{sec:scenarios}), correlations (\ref{sec:correlations}), and the noncontextual polytope (\ref{sec:NCpolytope}).
Sec.~\ref{sec:polytope} contains concepts of polytope theory.

In Sec.~\ref{sec:overview}, we give an informal overview of our lifting techniques.
Sec.~\ref{sec:Visualising Lifting} illustrates the gist of our liftings via some trivial but visualisable polytopes.
Sec.~\ref{sec:KS vs Bell} compares our work with Pironio's, describing the shift in perspective required for our extension of his lifting methods from Bell to NC inequalities.

In Sec.~\ref{sec:results}, we formally present our results.
After fixing notation in Sec.~\ref{sec:notation}, 
Sec.~\ref{sec:Proof outline} briefly outlines the logical structure common to the various lifting proofs presented.
Then, Secs.~\ref{sec:Measurement lifting} and~\ref{sec:outcome lifting} describe (and prove) in detail our methods for measurement lifting and outcome lifting, respectively.

In Sec.~\ref{sec:applications}, we employ our results to obtain facet-defining NC inequalities for scenarios for which no such inequalities were known.
Sec.~\ref{sec:App1} applies sequential measurement lifting to the CHSH inequality to obtain facet-defining NC inequalities for the two-outcome anti-heptagon scenario $\Bar{C}_7$ (the complement of the $7$-cycle),
while
Sec.~\ref{sec:App2} applies sequential outcome lifting to the KCBS inequality to obtain facet-defining NC inequalities for the three-outcome $5$-cycle scenario $C_5$, the simplest non-Bell cyclic scenario with three outcomes per measurement.
Sec.~\ref{sec:AppDiscussion} wraps up with general remarks concerning such applications.

In Sec.~\ref{sec:discussion}, we conclude and suggest some directions for future research.


\section{Concepts}
\label{sec:concepts}


In this section, we introduce the necessary background, including the setup to discuss contextuality and elements of polytope theory.

\subsection{Measurement scenarios}
\label{sec:scenarios}


Abstractly, a \emph{measurement scenario} (a.k.a.\ contextuality scenario or simply scenario) $\scenS$ is given by: a finite set $\XS$ of measurements, a symmetric reflexive relation on $\XS$ indicating \emph{compatibility}, and a finite set $\OSA$ of outcomes for each measurement $A \in \XS$.
We think of the set $\XS$ equipped with the compatibility relation as a graph, known as the \emph{compatibility graph} of the scenario.
A \emph{measurement context} is a clique in the compatibility graph, i.e.\ a set of pairwise-compatible measurements.
We write $\CS = \{C_1, \ldots, C_n\}$ for the set of maximal contexts, fixing an enumeration.
By maximal we mean that for all $C_i$ and $C_j$ in $\CS$, $C_i \subseteq C_j$ implies $C_i = C_j$.
We omit the subscript $\scenS$ whenever discussing a single scenario, as in the remainder of this section.

A \emph{Bell scenario} is a scenario in which measurements are assigned to parties, and where any two measurements are compatible if and only if they belong to different parties.
So, the compatibility graph is a complete $N$-partite graph where $N$ is the number of parties, and the maximal contexts correspond precisely to a choice of one measurement per party. This is the setting appropriate to study Bell nonlocality.

In a quantum mechanical realisation of a scenario, measurements are taken to be quantum observables and compatibility has the usual meaning of joint measurability. When discussing KS contextuality, measurements are assumed to be ideal, i.e.~yielding the same outcome when repeated and not disturbing any compatible observable.
Therefore, in quantum theory, they are represented as PVMs and compatibility corresponds to commutativity of the corresponding operators.

In a Bell scenario, one does not typically assume that the measurements must be ideal, so that in quantum theory they could be represented as POVMs.
On the other hand, each party's measurements must be local observables acting on a part of the system that is spatially separated from all the others; mathematically, on a factor of a tensor product Hilbert space.
It turns out, however, that all quantum-mechanical correlations in Bell scenarios are achievable by ideal measurements.
Consequently, not only do the notions of classicality agree, but also the set of quantum correlations for a Bell scenario (interpreted as a nonlocality scenario) and the set of quantum correlations for the corresponding contextuality scenario are identical.
This observation justifies regarding nonlocality as a special case of contextuality \cite{CSWPRL2014,Abramsky:2011NJP,AcinCMP2015}.
It has proved very useful in understanding the principles that shape quantum Bell nonlocal correlations \cite{Cabello:2013PRL,Cabello:2019PRA}.


\subsection{Correlations}
\label{sec:correlations}


A \emph{correlation} (a.k.a.\ empirical model or behaviour) on a scenario $\scenS$
collects specific outcome statistics for that scenario.
It is a family comprising a probability distribution on the joint outcomes for each maximal context in $\C$.
For each $C_i$, we fix an enumeration of the set of joint outcomes $\OO{C_i} = \prod_{A \in C_i} \OA = \{s^{1}_i, s^{2}_i,\ldots, s^{m_i}_i\}$.
A correlation can then be represented as a vector
\begin{equation*}
p  
=
[p(s^{1}_1),\ldots,p(s^{m_1}_1),\ldots,p(s^{1}_n),\ldots,p(s^{m_n}_n)],
\end{equation*}
where the component $p(s^{j}_i)$, often written more explicitly as $p(s^{j}_i \mid C_i)$,
represents the probability of observing the $j${-th} joint outcome $s^j_i$ upon jointly performing the measurements of the $i${-th} context $C_i$.
We have $p \in \mathbb{R}^d$, where $d = \sum_{C_i \in \C}|\OO{C_i}|$ is the dimension of the vector space wherein all the correlations on the given scenario live.

The fact that they determine a probability distribution $p_i$ over each maximal context $C_i$
means that the components must be all non-negative, i.e.\ $p(s^{j}_i) \geq 0$ for all $i,j$,
and in addition satisfy normalisation for each context $C_i$,
\begin{equation*}
 \sum_{j=1}^{m_i}p(s_i^{j})=1. \label{eq:8}
\end{equation*}

Moreover, we assume that the marginal probability distributions on the outcomes of each observable (or set of observables) is independent of the context, i.e.\ of what other measurements are performed together.
This condition is called \emph{no-disturbance} -- or, in Bell scenarios, \emph{no-signalling} -- and it prevents using the statistics for achieving communication between observers that are performing compatible measurements.
Correlations obtainable in quantum mechanics, in the ways briefly outlined above, always satisfy this condition.
More precisely, no-disturbance can be stated as follows: for any pair of maximal contexts $C_i$ and $C_{i'}$, the corresponding distributions $p_i$ and $p_{i'}$ determine the same marginal distribution over the intersection $C_i \cap C_{i'}$, i.e. 
for each (partial) joint outcome $t \in \OO{C_i \cap C_{i'}}$, it holds that
\begin{equation}
 \sum_j p(s_{i}^j) \;=\; \sum_{j'} p(s_{i'}^{j'}), \label{eq:nodisturbance}
\end{equation}
where the index $j$ runs over the set $\{j \mid s_{i}^j|_{C_i\cap C_{i'}}=t\}$ indexing those joint outcomes in $\OO{C_i}$ that extend $t$, and analogously for $j'$.

One may think of either side of Eq.~\eqref{eq:nodisturbance} as determining the probability $p(t)$ of obtaining outcome $t$ in the partial context $C_i \cap C_{i'}$; no-disturbance guarantees it is uniquely well defined.
Notice that the no-disturbance assumption justifies considering only maximal contexts in $\C$.


\subsection{The noncontextual polytope}
\label{sec:NCpolytope}


We are interested in characterising the set of correlations that admit a classical, i.e.\ noncontextual, explanation.

A \emph{noncontextual deterministic assignment} is a joint assignment of outcomes to all the measurements in a scenario, $\lambda \in \OO{\X} = \prod_{A \in \X} \OA$, which one may think of as a \emph{hidden variable}.
It determines a deterministic correlation
\begin{equation*}
v_{\lambda}= [v_\lambda(s^{1}_1),\ldots,v_\lambda(s^{m_1}_1),\ldots,v_\lambda(s^1_n),\ldots,v_\lambda(s^{m_n}_n)],
\end{equation*}
where $v_\lambda(s^{j}_i) \in \{0,1\}$ indicates whether $\lambda$ assigns the joint outcome $s^j_i$ to the context $C_i$,
\[
v_\lambda(s^{j}_i) = \begin{cases} 1 & \text{if $\lambda|_{C_i} = s^j_i$} \\ 0 & \text{otherwise.} \end{cases}
\]
Note that over each context $C_i$ such a vector $v_\lambda$ has precisely one component valued $1$ and the remaining $0$.

A correlation $p$ is said to be \emph{noncontextual (NC)} if it can be described as a convex combination of such deterministic correlations, i.e.\ if
\begin{equation}
 p=\sum_{\lambda \in \OX}\alpha_\lambda v_{\lambda} \label{eq:nc-convexvertices}
\end{equation}
for some $\alpha_\lambda \geq 0$ with $\sum_\lambda \alpha_\lambda = 1$.
This means that each component satisfies
\begin{equation*}
    p(s^{j}_i)= \sum_\lambda \alpha_\lambda v_\lambda(s^{j}_i). 
\end{equation*}
One may think of the coefficients $\alpha_\lambda$ as defining a probability distribution on the hidden-variable space of global assignments.

Eq.~\eqref{eq:nc-convexvertices} means that the set of NC correlations is the convex hull of a finite number of vectors $\{v_{\lambda}\}_{\lambda \in \OX}$.
It thus forms a convex polytope in $\mathbb{R}^d$, here given in terms of its \emph{V-representation}; see Sec.~\ref{sec:polytope} for more details on convex polytopes.
Any convex polytope can equally be described as the intersection of a finite number of half-spaces, given by linear inequalities; this is called an \emph{H-representation} of the polytope. 
Each valid inequality for the noncontextual polytope of a given scenario is known as a \emph{noncontextuality (NC) inequality} for that scenario, marking a boundary between the noncontextual and contextual regions.
The violation of any such inequality \cite{Klyachko:2008PRL,Cabello:2008PRL,Budroni:2022RMP} by a correlation witnesses contextuality, the nonexistence of a global probability distribution (on outcome assignments for all measurements in $\X$) whose marginals reproduce the given correlation. 

A minimal H-representation is given by the facet-defining inequalities; see Sec.~\ref{sec:polytope} below for details.
Although for restricted classes of polytopes, such as simplicial polytopes, there are known polynomial-time algorithms for facet enumeration (the problem of computing the H-representation given the V-representation) \cite{avis1991pivoting,10.1145/321556.321564,SWART198517}, characterising all the facets of the noncontextuality polytope for an arbitrary scenario is a notoriously hard problem to solve (NP-complete \cite{Avis_2005}).


Given a measurement scenario, let $\LND$ denote the subset of $\mathbb{R}^d$ consisting of
all correlation vectors that satisfy no-disturbance.
Similarly, write $\LQ$ and $\LNC$ for the sets of quantum correlations and of noncontextual correlations, respectively.
Both $\LND$ and $\LNC$ are convex polytopes, whereas $\LQ$ is also convex but in general not a polytope.
There is a sequence of inclusions
\[
\LNC \;\subseteq\; \LQ \;\subseteq\; \LND,
\]
and moreover these inclusions are strict for any non-trivial scenario.
However, it holds that the affine dimension of all three sets is the same, 
\begin{equation*}
\dim(\LNC) \;=\; \dim(\LQ) \;=\; \dim(\LND).
\end{equation*}
This fact can be interpreted as stating that quasi-probability distributions (which are normalised but may take negative values)
on deterministic noncontextual assignments can account for any non-disturbing correlation \cite{Abramsky:2011NJP}.
Along with normalisation, the no-disturbance condition imposes linear equality constraints on the components of a correlation vector,
rendering $\LND$ (and thus also $\LNC$) not full dimensional, i.e.\ $\dim(\LND) < d$.
In general, such linear constraints must be accounted for in the minimal H-representation of any non-full-dimensional polytope, as they introduce a degree of non-uniqueness to the form of the facet-defining inequalities.

In this work, our focus is on characterising the polytope $\LNC$ of noncontextual correlations for arbitrary scenarios.
To simplify the presentation, given a scenario $\scenS$, we denote its noncontextual polytope by $S$ (analogously, $\scenT$ and $T$).


\subsection{Convex polytopes}
\label{sec:polytope}


We recall some basic definitions of polytope theory that will be useful later on.
For more details, we refer the reader to \cite{Ziegler:2012}.
For brevity, in the context of this paper, the term polytope always means convex polytope.

An \emph{affine combination} of a set of points $p_1,\ldots, p_k \in \mathbb{R}^d$ is a linear combination whose coefficients add up to $1$,
i.e.\ $\sum_{i=1}^{k} \alpha_i p_i$ for some $\alpha_i \in \mathbb{R}$ with $\sum_i \alpha_i = 1$.
A \emph{convex combination} is an affine combination with the additional condition that all coefficients be non-negative, i.e.\ $\alpha_i \geq 0$. The \emph{affine span} (\emph{convex hull}) of a set of points is the set of all their affine (convex) combinations.

A set of points $p_0,\ldots, p_k \in \mathbb{R}^d$ is \emph{affinely independent}
when $\sum_{i=0}^{k} \alpha_i p_i= 0$ with $\alpha_i \in \mathbb{R}$ satisfying $\sum_i \alpha_i = 0$ implies that $\alpha_i = 0$ for all $i$.
This is equivalent to the vectors $p_1 - p_0, \ldots, p_k - p_0$ being linearly independent.
It is also equivalent to saying that no element of the set is an affine combination of the others.

A set of $k+1$ affinely independent points spans a set of dimension $k$.
More generally, the affine span of a set of points has dimension $k$ if and only if the maximum number of affinely independent points in that set is $k+1$.
The affine span of any such $k+1$ points equals the affine span of the whole set, and those $k+1$ points are called an \emph{affine basis} for the set.
The dimension $\dim(S)$ of an arbitrary subset $S$ of $\mathbb{R}^d$ is defined to be the dimension of its affine span.

A convex polytope is a subset $P \subseteq \mathbb{R}^d$ that is a convex hull of a finite set of points, its \emph{vertices}.
This set of vertices constitutes the V-representation of the polytope.

A \emph{linear inequality} on  $\mathbb{R}^d$ is a
predicate 
$I(p) \equiv b \cdot p \geq \beta$
on a variable vector $p \in \mathbb{R}^d$,
determined by a vector $b \in \mathbb{R}^d$ of coefficients and a bound $\beta \in \mathbb{R}$.
It defines a \emph{half-space} of $\mathbb{R}^d$:
$\{p \in \mathbb{R}^d \mid I(p)\} = \{ p \in \mathbb{R}^d \mid b \cdot p \geq \beta\}$.
A linear inequality is a \emph{valid inequality} for a polytope $P \subseteq \mathbb{R}^d$
whenever
$P$ lies entirely within the half-space described by it, that is, when
\begin{equation*}
    P \cap \{p \in \mathbb{R}^d \mid b \cdot p \geq \beta\} = P.
\end{equation*}
Since any point in $P$ can be written as a convex combination of its vertices, it is sufficient to test an inequality on the vertices to check for its validity.

Given a valid inequality $b \cdot x \geq \beta$ for a polytope $P$, the set $F = \{p \in P \mid b \cdot p=\beta\}$ of points of $P$ that saturate the inequality is called the \emph{face of $P$ supported by the inequality}.
Each face of a convex polytope $P$ is a convex polytope itself. 
A face $F$ is called \emph{proper} when it is neither the empty face nor the whole polytope itself, i.e.\ when $F\neq \varnothing$ and $F\neq P$.
Clearly, for a proper face $F$ of $P$, $0 \leq \dim(F) < \dim(P)$.
The minimal proper faces, of dimension $0$, are the vertices of $P$.
A maximal proper face, of dimension $\dim(P) - 1$, is called a \emph{facet} of $P$.

According to the Minkowski--Weyl theorem \cite{Minkowski:1897,Weyl:1934,Ziegler:2012}, a convex polytope can equivalently be described as an intersection of finitely many half-spaces. This defines an \emph{H-representation} of the polytope,
\begin{equation*}
 P = \bigcap_{i} \{p \in \mathbb{R}^d \mid b^{(i)} \cdot p \geq \beta^{(i)} \}. 
\end{equation*}
The set of inequalities supporting the facets of $P$ gives a minimal complete description of the polytope as an H-representation.
In fact, any valid inequality of $P$ can be derived from the facet-defining inequalities.

By the remarks above about the dimension of any subset of $\mathbb{R}^d$, the maximum number of affinely independent points in a polytope $P$ is $\dim(P)+1$.
One can always draw such $\dim(P)+1$ affinely independent points from the vertices of $P$.
Similarly, any face $F$ of $P$ contains $\dim(F)+1$ affinely independent vertices of $P$, which of course saturate the inequality supporting $F$.
In particular, a facet contains a set of $\dim(P)$ affinely independent vertices of $P$.
This means that one can always choose an affine basis of $P$ comprising $\dim(P) + 1$ affinely independent vertices
in such a way that $\dim(P)$ of those vertices belong to a chosen facet $F$, and the remaining vertex can be chosen arbitrarily among the vertices not belonging to $F$.
We make use of this fact repeatedly in our proofs.


\section{Overview of lifting method}
\label{sec:overview}


Before diving into the detailed description of our results and proofs in Sec.~\ref{sec:results},
we offer an accessible, intuitive sketch of the main ingredients of our lifting method.
Sec.~\ref{sec:Visualising Lifting} uses small -- trivial yet visualisable -- examples to convey the flavour of the various forms of lifting.
In Sec~\ref{sec:KS vs Bell}, we discuss how these extend Pironio's method from Bell scenarios.

\subsection{Visualising lifting}
\label{sec:Visualising Lifting}
A scenario $\scenS$ can be extended to a larger scenario $\scenT$ through the addition of new measurements and/or outcomes.
In each case, $\scenS$ is a sub-scenario of $\scenT$.
It is natural to enquire whether the knowledge of a facet-defining NC inequality for the sub-scenario $\scenS$ gives us some information about facet-defining NC inequalities for $\scenT$.
Lifting answers this question in the affirmative in that it is a method to transform each facet-defining inequality of the noncontextual polytope $S$ into one or more facet-defining inequalities of the polytope $T$.

The noncontextual polytope for the simplest contextuality-witnessing scenario (i.e.\ a scenario for which $\LNC \neq \LND$) is $8$-dimensional.
This makes it impossible to visualise lifting starting from this polytope.
Nonetheless, there are some simple visualisable NC polytopes -- albeit for scenarios unable to witness contextuality -- which aptly capture the idea of lifting.
We use these examples to provide some intuition for the idea behind our lifting method.
Despite being thoroughly uninteresting from the point of view of contextuality,
these examples are indeed special cases of our lifting results which `contain all the germs of generality'.

\paragraph*{Measurement lifting}
For measurement lifting, we consider two cases, as outlined in Sec.~\ref{sec:contributions}.

{\em Case I}
applies when the new measurement added to $\scenS$ is not simultaneously compatible with all the measurements effectively contributing to the facet-defining inequality being lifted, i.e.\ it is incompatible with at least one such measurement;
see Sec.~\ref{sec:Measurement lifting} for details including the meaning of `effectively contributing'.
We illustrate this case with an example that serves as a proxy for all such situations.

Consider $\scenS$ the scenario with two dichotomic measurements, $A$ and $B$, that are incompatible with each other.
Its NC polytope $S$ is two-dimensional (embedded in a four-dimensional ambient vector space) and is shown on the left-hand side of Fig.~\ref{fig:fig1}(a).
Its four vertices correspond to the deterministic assignments -- $00$, $01$, $10$, $11$ -- to measurements $A$ and $B$, in that order.
Observe that the fact that $A$ and $B$ are incompatible induces an affine dependency among the vertices of the polytope,
\begin{equation}
v_{00} - v_{01} + v_{11} - v_{10} = 0 ,
\label{eq:affinedependence-incompatible}
\end{equation}
which would not hold were $A$ and $B$ compatible; cf.\ the simplex on the right-hand side of Fig.~\ref{fig:fig1}(b).
This kind of affine relation induced by incompatibility plays a crucial role in our proofs;
see for example Eq.~\eqref{eq:16}.

Since $S$ is a two-dimensional polytope, its facets are one-dimensional.
One of its facet-defining inequalities is depicted in the figure;
it supports the facet containing the vertices
$v_{10}$ and $v_{11}$. 

Now consider extending $\scenS$ to $\scenT$ by introducing a new measurement $C$ incompatible with both $A$ and $B$.
The NC polytope $T$ is three-dimensional (embedded in a six-dimensional ambient vector space), and is shown on the right-hand side of Fig.~\ref{fig:fig1}(a).
Notice that each outcome of $C$ determines an extension
of each deterministic assignment in $\scenS$ (hence, vertex of $S$)
to a deterministic assignment in $\scenT$ (vertex of $T$);
e.g.\ $v_{010}$ and $v_{011}$ in $T$ are the two extensions of $v_{01}$ in $S$.
One can think of this as captured by a polytope projection from $T$ to $S$ which `forgets' the outcome of $C$.

The polytope $T$ has a facet with vertices $v_{100}$, $v_{101}$, $v_{110}$, and $v_{111}$, whose supporting inequality is depicted on the right of Fig.~\ref{fig:fig1}(a).
These four vertices of $T$ are precisely the extensions of the vertices $v_{01}$ and $v_{11}$ of $S$, which saturate the initial inequality depicted on the left.
This inequality shown on the right of Fig.~\ref{fig:fig1}(a) is thus the \emph{lifting} of the inequality on the left.
One can similarly obtain three other facet-defining inequalities of $T$ from the remaining three facets of $S$.

In general, case I measurement lifting maps a facet $F$ of $S$ to the facet of $T$ whose set of vertices is exactly the set of all extensions of vertices in the original facet $F$.
As we will see in Sec.~\ref{sec:Measurement lifting}, the explicit form of the lifted inequality turns out to be, in a sense, the same as that of the initial inequality, since the outcome of the new measurement is ignored and thus `traced out'.



\begin{figure}
    \centering
    \includegraphics[scale = 0.75]{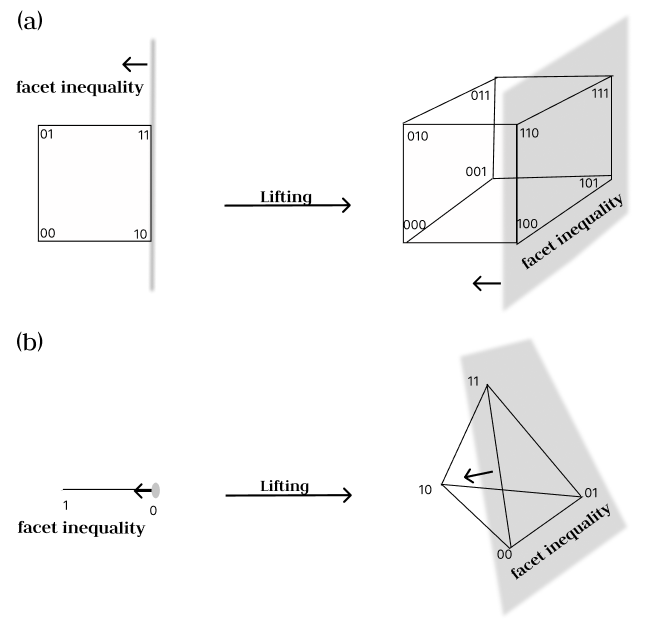}
    \caption{Depiction of measurement lifting in simple scenarios with dichotomic measurements.
    (a) Case I measurement lifting, from the scenario $\scenS$ with two incompatible measurements to the scenario $\scenT$ with three pairwise-incompatible measurements. The shaded line and directional arrow on the left represent a facet-defining inequality saturated by vertices $v_{10}$ and $v_{11}$, while the shaded plane and arrow on the right represent the lifted inequality, saturated by all extensions of $v_{11}$ and $v_{10}$ to $T$.
    (b) Case II measurement lifting, from the scenario with a single measurement to the scenario with two compatible measurements.
    The facet-defining inequality saturated only by $v_{0}$, depicted on the left, is lifted to a facet-defining inequality saturated by both extensions of $v_{0}$ ($v_{00}$ and $v_{01}$) plus one extension of $v_1$ ($v_{11}$), depicted on the right.}
    \label{fig:fig1}
\end{figure}


{\em Case II} applies when the newly added measurement is compatible with all the preexisting measurements that effectively contribute to the facet-defining inequality being lifted.

To visualise lifting in this case, we take $\scenS$ to be the scenario with a single dichotomic measurement, say $A$.
Its NC polytope $S$ is a one-dimensional line segment or $1$-simplex (embedded in a two-dimensional ambient vector space), shown on the left-hand side of Fig.~\ref{fig:fig1}(b).
It has only two facet-defining inequalities. One of them, saturated only by the vertex $v_0$, is depicted in the figure.

We extend this scenario to $\scenT$ by adding a dichotomic measurement $B$ compatible with $A$.
The resulting NC polytope $T$ is a three-dimensional tetrahedron or $3$-simplex (embedded in a four-dimensional ambient vector space), shown on the right-hand side of Fig.~\ref{fig:fig1}(b).

The facet-defining inequality depicted in the figure is one of two possible liftings of the initial inequality on the left of Fig.~\ref{fig:fig1}(b).
It is saturated by the vertices $v_{00}$, $v_{01}$, and $v_{11}$.
The first two are all the extensions of $v_0$, the vertex that saturates the initial inequality,
while the third is one of the two possible extensions of $v_{1}$, the other, non-saturating vertex of $S$. 
The vertex of $T$ corresponding to the other extension, $v_{10}$, does not saturate the inequality shown on the right of Fig.~\ref{fig:fig1}(b).
However, a different lifting of the same initial inequality shown on the left of Fig.~\ref{fig:fig1}(b) would support the facet containing the vertices $v_{00}$, $v_{01}$, and $v_{10}$.

In general, case II measurement lifting maps a facet $F$ of $S$ to a facet of $T$ whose set of vertices consists of:
(i) every possible extension of the vertices in the original facet $F$, and
(ii) every possible extension of the remaining vertices of $S$ (not in the original facet $F$)
except those that assign a chosen fixed outcome to the new measurement (outcome $0$ for $B$ in the example depicted above).
Varying the choice of this fixed outcome for the new measurement yields different liftings to $T$ of the same facet-defining inequality of $S$.

Notice that in both cases I and II above, the lifted inequality is effectively testing the original one, i.e.\ it simply represents a re-expression of the original inequality in the language of the larger polytope $T$.
Moreover, while in case I each facet of $S$ yields a single facet of $T$, in case II it yields one facet for each possible outcome of the newly added measurement.

\paragraph*{Outcome lifting}
In the case of outcome lifting, we consider extending a scenario $\scenS$ to $\scenT$ by adding an extra outcome $a_0$ to an existing measurement $A$.
Here, unlike in the case of measurement lifting, there is no (immediate) concept of extension of assignments, and thus of vertices of $S$ to vertices of $T$.
In fact, the vertices of $S$ could be seen as a strict subset of those of polytope $T$.

These vertices behave the same way with respect to an outcome-lifted inequality in $T$ as they do with respect to the initial facet-defining inequality of $S$, either saturating both or falling short by the same amount. However, there are more vertices in the larger polytope $T$, namely those that assign the new outcome $a_0$ to the measurement $A$.
Each such vertex behaves exactly as the vertex obtained by substituting $a_k$ for the outcome of $A$ (while leaving the rest of the assignment intact), for some fixed choice outcome $a_k$ of $A$ already present in the initial scenario $\scenS$.

In other words, a facet $F$ of $S$ is lifted to a facet of $T$ whose set of vertices consists of: the vertices in $F$ plus the vertices obtained from a vertex in $F$ by replacing (the fixed) outcome $a_k$ by $a_0$ for the measurement $A$.

As for case II measurement lifting, there is an element of choice involved.
Indeed, for each choice of $k$, one obtains a
(possibly different)
facet-defining inequality of $T$.
There is one exception, though: choosing $k$ fails to yield a facet of $T$ when the initial facet of $S$ is itself obtainable from a sub-scenario of $\scenS$ via case II measurement lifting by fixing the choice of outcome $a_k$ when adding measurement $A$;
in other words, when the facet-defining inequality being lifted is such that  $A$ is compatible with all the measurements effectively contributing to it and all its non-saturating vertices assign outcome $a_k$ to $A$.

One may think of each such choice of $k$ as determining a polytope projection $T \to S$ which performs a coarse-graining of the outcomes of $A$ by clubbing together $a_0$ and $a_k$ into the same outcome.
The lifted inequality then corresponds to `tracing out' along this identification, much as in case I measurement lifting.
This means that the form of the lifted inequality is such that it does not distinguish between the vertices with $a_k$ and $a_0$ as outcomes of $A$,
while the remaining vertices behave as they did with respect to the original inequality.
Indeed, in the new scenario, the lifted inequality is still in effect testing the original inequality.
That is why the new outcome needs to be `clubbed together' with -- and thus made indistinguishable from -- some pre-existing outcome, so that the `effective' number of outcomes is kept the same.


\subsection{Lifting: from Bell to KS}
\label{sec:KS vs Bell}


We now comment on the comparison with the work of Pironio in Ref.~\cite{pironio2005lifting} for the case of Bell scenarios, a source of inspiration for the present work.

From an arbitrary contextuality scenario, the simplest (single-step) extensions are of two types: 
(a) adding a new measurement, and (b) adding a new outcome for an already existing measurement.
Lifting across Bell scenarios, as in Ref.~\cite{pironio2005lifting}, means that the extended scenario must also be a Bell scenario.
In such extensions, type (a) can be further divided into two subcases:
adding a new party with a single measurement setting, or adding a new measurement to an already existing party.
When adding a new party with a single measurement, this new measurement is necessarily compatible with all the preexisting measurements (as they belong to different parties). Hence, this situation is always captured by case II measurement lifting (see Secs.~\ref{sec:contributions} and~\ref{sec:Visualising Lifting}).
On the other hand, when adding a new measurement to an already existing party, the new measurement is necessarily incompatible with all other measurements of that same party.
Thus, if the initial Bell inequality being lifted effectively involves this party, this situation falls within case I measurement lifting; otherwise, it falls within case II.

On a more technical note, the following distinction is relevant in establishing the Bell lifting results, and needs to be revised for our generalisation.
When adding a new party, the new measurement is appended to all previously existing maximal contexts.
By contrast, when adding a new measurement to an already existing party, the previous maximal contexts are preserved intact but new maximal contexts are introduced, namely those involving the newly added measurement.
When generalising to arbitrary contextuality scenarios, one no longer has a notion of parties.
Consequently, this neat separation is no longer possible,
and we are forced to consider situations in which a mix of new maximal contexts and extensions of previous maximal contexts arises.
This is reflected in the general form taken by a correlation vector on the extended scenario, shown in Eq.~\eqref{eq:form-vector-extendedspace} ahead.


\section{Lifting noncontextuality inequalities}
\label{sec:results}

We now move on to presenting our results in detail.
After fixing some notation in Sec.~\ref{sec:notation}, we sketch in Sec.~\ref{sec:Proof outline} the structure common to our proofs of measurement and outcome lifting.
We then delve into each of the forms of lifting in detail, in Secs.~\ref{sec:Measurement lifting} and~\ref{sec:outcome lifting}, respectively.

\subsection{Notation}
\label{sec:notation}


As before, when extending one scenario to another, the initial scenario is always denoted by $\scenS$ and its extension by $\scenT$. The respective noncontextual polytopes are denoted by $S$ and $T$.
The dimensions of these polytopes are denoted by $d_{S}$ and $d_T$, respectively.
The newly added measurement (for measurement lifting) or the existing measurement to which a new outcome is added (for outcome lifting) is denoted by $A$.

We use the letter $p$ to refer to a general vector in the space of correlations for $\scenS$, i.e.\ the ambient vector space of the polytope $S$, and we similarly use $q$ for $\scenT$.

We reserve the letters $u, v, w$ for vertices of $S$ or $T$.
Recall from Sec~\ref{sec:NCpolytope} that these vertices are of the form $v_\lambda$ corresponding to global assignments $\lambda$ of outcomes to all measurements.
To simplify the presentation, from now on we often elide the distinction between a vertex and its underlying assignment: e.g.\ we say that a vertex $v$ assigns a value to a measurement,
and write $v|_C$ for the joint outcome assignment over a context $C$ determined by the vertex $v$
(the restriction of the underlying global assignment to the measurements in $C$).

Recall that in specifying correlations on a scenario we only consider the maximal contexts, since statistics for their subsets can be recovered by marginalisation unambiguously due to no-disturbance.
We use the term \emph{partial context} to refer to a proper subset of a maximal context, whenever such emphasis is required.

Non-full-dimensional convex polytopes satisfy non-trivial affine equality constraints. In the case of NC polytopes, these correspond to the no-disturbance and normalisation conditions. 
A valid inequality $b \cdot p\geq \beta$ for a polytope remains valid
if one adds to it any linear combination of the equality constraints.
In the case of NC polytopes, one can always add a multiple of a normalisation condition to rewrite any NC inequality so that its lower bound is zero.
Therefore, without loss of generality, an arbitrary NC inequality will be represented in the form $b \cdot p \geq 0$.
Moreover, at various points,
we make use of the freedom afforded by the (linear) no-disturbance conditions to pick particularly convenient vectors of coefficients.

We typically denote by $B_P$ an affine basis of interest for a polytope $P$,
whose elements are taken from the vertices of the polytope.
We make crucial use of the fact that $\dim(P)$ vertices of $B_P$ can be chosen from any given facet.
When lifting a facet-defining inequality of $S$, we consider such an affine basis $B_S= \{u_{i}\}_{i=1}^{d_S + 1}$.
For slight convenience, we enumerate $B_S$ in a way that lists $d_S$ saturating vertices of the inequality (i.e.\ vertices in its supported facet) as the first $d_S$ elements, with the last element being the non-saturating one.


\subsection{Proof outline}
\label{sec:Proof outline}


Before burdening the reader with the mathematical details of the proofs of our results, we outline the logical structure underlying them. The proofs of measurement lifting and outcome lifting follow a common script.

Let $I_T$ be the inequality claimed to be the lifting of inequality $I_S$. To prove this claim we proceed as follows:

\begin{enumerate}[leftmargin=*,align=left]
\item We pick an affine basis $B_S = \{v_1, \ldots, v_{d_S+1}\}$ of vertices of $S$ chosen so that its first $d_S$ points saturate the initial inequality $I_S$.
 \item We use it to construct an over-complete affine basis $B_T$ of vertices of $T$, partitioned as $B_T = \bigcup_{j} V_j$,
 where each component $V_j = \{v^{(j)}_1, \ldots, v^{(j)}_{d_S + 1}\}$ consists of $d_{s}+1$ affinely independent vertices of $T$.
 Each $v^{(j)}_i$ is built from the corresponding vertex $v_i$ of $B_S$.
The specific construction depends on the type of lifting under consideration,
but the common aspect is that
$v^{(j)}_i$ is an `extension' of $v_i$ in that it is mapped to $v_i$ under the appropriate polytope projection $T \to S$.
 

\item By construction, the first $d_S$ vertices of each partition component $V_j$ saturate the inequality $I_T$.
Moreover, there is (at least) one partition component $V_k$ where the last vertex does not saturate $I_T$.

\item
For all other partition components, i.e.\ for all $j \neq k$,
either the last element $v^{(j)}_{d_S+1}$ saturates the inequality $I_T$,
or one can affinely eliminate it from $B_T$
by expressing it as an affine combination of the remaining vertices of $V_j$ and the vertices of $V_k$, essentially using Eq.~\eqref{eq:affinedependence-incompatible}.
This leaves only one remaining non-saturating vertex within $B_T$, namely  $v^{(k)}_{d_S+1}$, the last vertex in $V_k$.


 \item This is enough to establish the claim. Since (the reduced) $B_T$ affinely spans $T$, it contains a set of $d_T + 1$ affinely independent vertices. From the previous step, it has only one non-saturating vertex. It must therefore contain a set of $d_T$ affinely independent vertices that saturate $I_T$, proving that this inequality is facet-defining.

\end{enumerate}


\subsection{Measurement lifting}
\label{sec:Measurement lifting}


We first consider lifting under measurement extensions.
Let $\scenS$ be extended to $\scenT$ by the addition of a new measurement $A$.
Write $J \subseteq \XS$ for the set of all the measurements in $\scenS$ that are compatible with $A$, i.e.~the neighbourhood of $A$ in the compatibility graph of $\scenT$.

\subsubsection{The extended ambient space}


For any (not necessarily maximal) context, i.e.\ set of pairwise compatible measurements, $U$ contained in $J$, the set $U \cup \{A\}$ is a context of $\scenT$. If $U$ is maximal within $J$ then this results in a maximal context.
Thus, $A$ may contribute to maximal contexts of $\scenT$ in two ways:
\begin{enumerate}[leftmargin=*,align=left]
 \item create maximal contexts by being compatible with maximal contexts in $\CS$:
 for each context maximal within $J$ that is also maximal within $\XS$ (i.e. belongs to $\CS$), appending $A$ to it gives a (maximal) context in $\CT$. 
 \item create maximal contexts by being compatible with partial contexts in $\scenS$:
 for each context maximal within $J$ that is a partial context with respect to $\XS$ (i.e. does not belong to $\CS$), appending $A$ to it gives a (maximal) context in $\CT$. 
\end{enumerate}
Let $\CmcA$ denote the set of all contexts in $\CT$ arising as in item 1 above, where `$\mc$' reminds us that these contexts are extensions of maximal contexts of $\scenS$. Likewise, write $\CpcA$ for the set of contexts in $\CT$ arising as in item 2, where `$\pc$' indicates that they are extensions of partial contexts of $\scenS$.
Notice that the case covered by item 2 also implies the existence of maximal contexts in $\CS$ that are not fully compatible with $A$, and thus remain maximal contexts in $\CT$.
We write $\Cold$ for the set of all such contexts.
We can therefore write $\CT$ as a disjoint union of three mutually exclusive sets:
\begin{equation*}
 \CT = \Cold \cup \CmcA \cup \CpcA.
\end{equation*}
This induces a direct sum decomposition of the vector space of correlations for $\scenT$, which is the ambient space of the polytope $T$.
An arbitrary vector $q$ in this space can be written as 
\begin{equation}
 q = [\qold, \qmcA, \qpcA],
 \label{eq:form-vector-extendedspace}
\end{equation}
where $\qold$ represents the components indexed by joint outcomes of contexts in $\Cold$ and likewise for $\qmcA$ and $\qpcA$.

\subsubsection{Effectively contributing measurements}
\label{sec:effectivecontribution}

We need to consider two different forms of measurement lifting, depending on the
(compatibility behaviour of the) new measurement and the initial inequality being lifted.
The distinction hinges on the notion of an \emph{effectively contributing measurement to an inequality}, which we define by its negation.

Consider an inequality $b \cdot p \geq 0$ over the space of correlations for $\scenS$.
In short, a measurement $M \in \XS$ is said \emph{not} to contribute effectively to the inequality when the latter is insensitive to the outcome of $M$,
i.e.~its left-hand side
remains invariant under any change of outcome for $M$, rendering this measurement's role irrelevant to the inequality.

More formally, recall from Sec.~\ref{sec:NCpolytope} that the vertices of $S$ are deterministic noncontextual correlations $v_\lambda$, determined by global assignments $\lambda$ of outcomes to all measurements in $\XS$. 
Write $\lambdachange{M}{m_k}$
for the assignment obtained from $\lambda$ by changing $M$'s outcome to $m_k$,
    \[\lambdachange{M}{m_k}(B)  
    = \begin{cases} m_k & \text{if $B=M$} \\ \lambda(B) & \text{otherwise.} \end{cases}\]
The condition is then that for any $\lambda$ and any outcome $m_k$ of $M$, one has $b \cdot v_\lambda = b \cdot v_{\lambdachange{M}{m_k}}$.


In terms of the vector of coefficients $b$, this means that, using the equality constraints (viz.\ no-disturbance),
$b$ can be brought into a form that is symmetric under any change of outcome for $M$,
i.e.\ such that  
\begin{equation}\label{eq:symmetric_effective}
b(m_k \, s \mid M \, U) = b(m_j \, s \mid M \, U)
\end{equation}
for all $m_k$, $m_j$ outcomes of $M$,
and all maximal contexts $\{M\} \cup U$ and outcome assignments to the remaining measurements $s \in \OO{U}$. 
With the coefficient vector brought to such a symmetric form, one can unambiguously write $b(s \mid U)$ for the unique coefficient in Eq.~\eqref{eq:symmetric_effective}, without reference to the measurement $M$.
One could, in fact, effectively regard the inequality as being defined over the sub-scenario that excludes the measurement $M$.

We assume throughout that the starting inequality on $\scenS$ has been put in such an \emph{effective} form,
eliminating \emph{all} the measurements not effectively contributing to it.
Note that this means that there is a single $b(s \mid U)$ whenever $U$ is a partial context of $\scenS$ that is maximal within the set of effectively contributing measurements, even if $U$ belongs to more than one maximal context of $\scenS$, when combined with different sets of non-effectively-contributing measurements.

\subsubsection{The form of the lifted inequalities}

Let $b \cdot p\geq 0$ be a facet-defining NC inequality for the scenario $\scenS$.
What we aim, effectively, is to test this inequality in the extended scenario $\scenT$.
This is the premise of lifting:
choosing a vector of coefficients $b'$ in such a way that the inequality $b' \cdot q \geq 0$ over the space of correlations for $\scenT$
remains in essence equivalent to $b \cdot p \geq 0$, but also is a facet-defining NC inequality for $\scenT$.
In other words, the lifted inequality $b' \cdot q \geq 0$ still effectively tests (reduced) statistics on the sub-scenario $\scenS$ but given statistics for $\scenT$.
This requires tweaking with the role of the added measurement $A$ in the expression $b' \cdot q$ by a judicious choice of the coefficients $b'$.

To this end, we must consider two different constructions of coefficient vectors built from $b$, which provide an appropriate answer in different circumstances. In summary (to be expanded below):
\begin{enumerate}[leftmargin=*,align=left,label=\Roman*.,widest*=2]
\item\label{sec: Case 1}
 the coefficient vector $\bItrAZ$
traces $A$ out in the $\CmcA$ components whereas it zeroes out each component in $\CpcA$,
leaving $\Cold$ ones as they appear in the original inequality;

\item the coefficient vector $\bZakak$ enforces only the conditional statistics in $b' \cdot q$:
it fixes an outcome $a_k$ of $A$ within the contexts in $\CmcA \cup \CpcA$ (i.e.~the contexts containing $A$),
treating the corresponding components like the original coefficient vector $b$,
while it zeroes out all components where the outcome of $A$ is not equal to $a_k$, including all components in the $\Cold$ part.
\label{sec: Case 2} 
\end{enumerate}
The subscripts `$\ItrAZ$' and `$\Zakak$' above were chosen to be suggestive of the action of the coefficient vector on each of the three components of the space of correlations as defined in Eq.~\eqref{eq:form-vector-extendedspace} in comparison to that of the initial vector of coefficients $b$:
`$I$' for leaving the action unchanged from $b$, `$0$' for picking coefficients equal to zero, `$\tr(A)$' for tracing out the outcomes of the new measurement $A$, `$a_k$' for selecting on a chosen outcome $a_k$ for $A$.

Depending on the compatibility relations between the newly added measurement $A$ and the preexisting measurements in $\XS$,
one of the above options gives a vector of coefficients for a lifted inequality.
We now explain when to use which form and describe the constructions in more detail.
Bear in mind once again that, in each case, these will only give \emph{one} possible form for the inequality supporting the lifted facet:
since the noncontextual polytopes are not full-dimensional, there is not a unique form (i.e.\ vector of coefficients) for the inequality supporting each facet.

{\em Case I} is used to obtain a lifted inequality when the measurement $A$ is not simultaneously compatible with all of the effectively contributing measurements for the initial inequality $b \cdot p \geq 0$.
In this case, the lifted inequality takes the form
\begin{equation}
 \bItrAZ \cdot q \geq 0. \label{eq:13}
\end{equation}
where we build from $b$ the new vector of coefficients $\bItrAZ$ as follows:
\begin{itemize}
\item over a context $C$ in $\Cold$, we pick the same coefficients as in the initial inequality thus leaving $b$ unchanged:
for each $s \in \OO{C}$,
\begin{equation}\label{eq:bItrAZ_old}
\bItrAZ (s \mid C)  = b(s \mid C);
\end{equation}
\item over a context $\{A\} \sqcup C$ in $\CmcA$, we trace out measurement $A$ by assigning coefficients taken from $b$ irrespective of $A$'s outcome:
for each  $s \in \OO{C}$ and $a \in \OA$,
\begin{equation}\label{eq:bItrAZ_mcA}
\bItrAZ(a \, s \mid A \, C) = b(s \mid C);
\end{equation}
\item a context $\{A\} \sqcup U$ in $\CpcA$ is ignored completely by picking the coefficients to be zero:
for each $s \in \OO{U}$ and $a \in \OA$,
\begin{equation*}
\bItrAZ(a \, s \mid A \, C) = 0.
\end{equation*}
\end{itemize}

To better understand this construction, note that any non-disturbing correlation $q$ on $\scenT$ 
determines a correlation $q|_\scenS$ on the sub-scenario $\scenS$ by marginalisation, forgetting the outcome of $A$ by summing over all the possibilities.
Concretely, given $C \in \CS$, either it is in $\Cold$ and then $q|_\scenS(s \mid C) = q(s \mid C)$,
or it is wholly compatible with $A$ (so that $\{A\} \sqcup C \in \CmcA$)
and then $q|_\scenS(s \mid C) = \sum_{a \in O_A} q(a\, s \mid A \, C)$.
The inequality determined by $\bItrAZ$ effectively tests for the initial inequality $b \cdot p \geq 0$ over such marginal statistics.
That is, for all non-disturbing correlations $q$ on $\scenT$ we have that
\begin{equation}\label{eq:caseImarginal}
\bItrAZ \cdot q \;=\; b \cdot q|_S.
\end{equation}

Later, within the proof of Theorem~\ref{sec:theorem 1}, it will become clear why this construction (case I) is not applicable when $A$ is compatible with all the `effectively contributing measurements' simultaneously.
Despite still yielding a valid inequality, it does not yield a facet-defining one.
For those situations,
we need to consider the other case that we now describe.

{\em Case II} is used to obtain a lifted inequality whenever $A$ is compatible with all the effectively contributing measurements for the initial inequality $b \cdot p \geq 0$.
Recall that we assume that the vector $b$ is in the effective form described in Sec.~\ref{sec:effectivecontribution}.
The explicit form of the lifted inequality is then:  
\begin{equation}
 b_{0,a_k,a_k} \cdot q \geq 0. \label{eq:14}
\end{equation} 
where we build from $b$ the new vector of coefficients $\bZakak$ as follows:
\begin{itemize}
    \item a context $C$ in $\Cold$ is ignored:
    for each $s \in \OO{C}$,
    \begin{equation}\label{eq:bZakak_old}
    \bZakak (s \mid C)  = 0;
    \end{equation}
    \item over a context $\{A\} \sqcup U$ in $\CpcA \cup \CmcA$,
    we pick the coefficient from the initial $b$ when the outcome for $A$ is the fixed $a_k$ and ignore the event otherwise:
    for each  $s \in \OO{U}$ and $a \in \OA$,
    \begin{equation}\label{eq:bZakak_+A}
    \bZakak(a \, s \mid A \, U) = \begin{cases} b(s \mid U) & \text{if $a = a_k$,} \\ 0 & \text{if $a \neq a_k$,}\end{cases}
    \end{equation}
    where $b(s \mid U)$ is well defined due to $b$ being in effective form: see Eq.~\eqref{eq:symmetric_effective} and subsequent explanation, noting that any extension of the context $U$ in $\scenS$ consists solely of measurements incompatible with $A$, which are therefore not effectively contributing to the inequality $b \cdot p \geq 0$.
\end{itemize}

\subsubsection{Recovering the initial inequality}
\label{subsec:recovering}

Since we provide an `if and only if' condition in the measurement lifting theorem (Theorem \ref{sec:theorem 1}),
it is also worth noticing that the constructions of the new (lifted) inequalities over $\scenT$ can be inverted, i.e.\ one can recover the initial inequality given inequalities \eqref{eq:13} and \eqref{eq:14}.
In going back from scenario $\scenT$ to scenario $\scenS$, the contexts in $\Cold$ remain, those in $\CpcA$ are eliminated,
and the measurement $A$ is dropped from each context in $\CmcA$, yielding the set of contexts we denote by $\Cmc$.

Given the coefficient vector $\bItrAZ$, one can easily recover the vector $b$:
for contexts in $\Cold$, the components are unchanged;
for a context $C \in \Cmc$, the components can be read off from the compoenents of $\bItrAZ$ on the corresponding context $C \cup \{A\} \in \CmcA$, since  $b(s \mid C) = \bItrAZ(a \, s \mid A\, C)$ for any $a \in\OA$.

The second case is a little more subtle.
Given the vector $\bZakak$, one may not necessarily recover the original coefficient vector $b$, but one can recover some $b'$ that determines an equivalent inequality.
Note that $\bZakak$ has non-zero coefficients only in the components corresponding to $\CmcA \cup \CpcA$
For each context $U \cup \{A\} \in \CmcA \cup \CpcA$, pick some maximal context $C_U$ in $\CS$ that extends $U$, and let $R_U = C_U \setminus U$ be the set of added measurements. Note that if $U \cup \{A\}$ comes from $\CmcA$, then $U$ is already itself maximal in $\scenS$, being in $\Cmc$, and so $R_U=\varnothing$;
if it comes from $\CpcA$, then $U$ is a partial context of $\scenS$ which may thus be extended to some maximal context in $\Cold$ (in this case the choice of $C_U$ is not necessarily unique).
Now, for any $s \in \OO{U}$ and $r \in \OO{R_U}$, set $b'(s \, r \mid U \, R_U) = \bZakak(a_k \, s \mid A \, U)$.
The components in $b'$ corresponding to contexts not of the form $C_U$ for any $U$ are set to zero.
Note that the assignment $U \mapsto C_U$ is injective: if $U_1, U_2$ are both included in the same maximal context and $U_i \cup \{A\} \in \CpcA$, then $U_1 \cup U_2 \cup \{A\}$ is itself a context including both $U_i \cup \{A\}$,
which forces $U_1=U_2$ by maximality of the contexts in $\CpcA$.


\subsubsection{Validity of the lifted inequalities}
\label{subsec: Validity}

Before proving our main results regarding their facet-defining nature, we prove the validity of these newly defined inequalities for the polytope $T$.


Recall from Sec~\ref{sec:polytope} that it is sufficient to test the inequality on vertices of $T$.
Over each maximal context $C$, a vertex $v$ has precisely one non-zero component, equal to $1$, that of the joint outcome $v|_C \in \OO{C}$. 
Thus, evaluating an inequality  on a vertex  amounts to adding up the coefficients
corresponding to those components, 
a single surviving one for each maximal context.

Moreover, note that any vertex $v$ of $T$ determines a vertex $u$ of $S$ by restricting the global assignment to the measurements in $\XS$.


\begin{proposition}\label{prop:valid-MLcaseI}
Let the scenario $\scenT$ be an extension of the scenario $\scenS$ by the addition of a single measurement $A$.
Then $b \cdot p \geq 0$ is a valid inequality for $S$
if and only if $\bItrAZ \cdot q \geq 0$ is a valid inequality for $T$.
\end{proposition}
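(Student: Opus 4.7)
The plan is to leverage the marginal identity stated in Eq.~\eqref{eq:caseImarginal}, namely $\bItrAZ \cdot q = b \cdot q|_{S}$ for every non-disturbing correlation $q$ on $\scenT$, which is essentially built into the definition of $\bItrAZ$ (see Eqs.~\eqref{eq:bItrAZ_old} and~\eqref{eq:bItrAZ_mcA}). Since validity of a linear inequality on a convex polytope needs only to be checked on its vertices (Sec.~\ref{sec:polytope}), and every vertex of an NC polytope is non-disturbing, it suffices to combine this identity with the two-way correspondence between vertices of $S$ and vertices of $T$ that is induced by restricting/extending global deterministic assignments.

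For the forward direction, assume $b \cdot p \geq 0$ holds on $S$. Let $v$ be an arbitrary vertex of $T$, corresponding to a global assignment $\mu \in \OO{\XT}$. Its restriction $\mu|_{\XS}$ is a global assignment on $\scenS$, so $v|_{S}$ is a vertex of $S$. Applying the hypothesis gives $b \cdot v|_{S} \geq 0$, and invoking Eq.~\eqref{eq:caseImarginal} yields $\bItrAZ \cdot v \geq 0$. Since $v$ was an arbitrary vertex of $T$, the lifted inequality is valid on $T$.

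For the reverse direction, assume $\bItrAZ \cdot q \geq 0$ holds on $T$. Let $u$ be an arbitrary vertex of $S$, corresponding to a global assignment $\lambda \in \OO{\XS}$. Pick any outcome $a \in \OA$ and extend $\lambda$ to $\tilde\lambda \in \OO{\XT}$ by setting $\tilde\lambda(A)=a$ (and $\tilde\lambda|_{\XS}=\lambda$); the resulting vertex $v$ of $T$ satisfies $v|_{S}=u$. Applying the hypothesis gives $\bItrAZ \cdot v \geq 0$, and Eq.~\eqref{eq:caseImarginal} turns this into $b \cdot u \geq 0$. Thus the original inequality is valid on $S$.

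There is no real obstacle here: the content of the proposition is essentially a direct transcription of the marginal identity, once one notes that restriction of deterministic assignments sends vertices of $T$ to vertices of $S$ surjectively (any vertex of $S$ is hit by choosing an arbitrary outcome for $A$). The only point to be mildly careful about is that $\bItrAZ$ zeroes out the $\CpcA$-block, so those components genuinely do not enter either side of Eq.~\eqref{eq:caseImarginal} and the identity goes through on the nose.
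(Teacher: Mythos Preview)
Your proof is correct and follows essentially the same approach as the paper: both directions hinge on the identity $\bItrAZ \cdot v = b \cdot v|_\scenS$ for vertices, together with the surjectivity of the restriction map (any vertex of $S$ extends to one of $T$ by choosing an outcome for $A$). The only cosmetic difference is that the paper spells out this identity on vertices context-by-context rather than invoking Eq.~\eqref{eq:caseImarginal} as a black box, but the logical structure is identical.
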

\begin{proof}
This essentially amounts to Eq.~\eqref{eq:caseImarginal}, but we now show it explicitly in terms of the vertices.

Let $v$ be a vertex of $T$, and $u = v|_\scenS$ be the corresponding vertex of $S$ obtained by restriction of the global outcome assignment.
The non-zero coefficients in $\bItrAZ$ are all in components relative to contexts in $\Cold \cup \CmcA$.
When evaluating $\bItrAZ \cdot v$, we have:
over a context $C \in \Cold$ the surviving coefficient corresponds to $v|_C = u|_C$, being given as $b(u|_C \mid C)$ by Eq.~\eqref{eq:bItrAZ_old};
over a context $\{A\} \cup C \in \CmcA$, the surviving coefficient corresponds to $v|{\{A\}\cup C}$, being given as
 $\bItrAZ(v|_{\{A\} \cup C} \mid A \, C) = b(u|_C \mid C)$ by Eq.~\eqref{eq:bItrAZ_mcA}.

Since each  context $C \in \CS$ is either in $\Cold$ or extends to one in $\CmcA$, the above cases exhaust all the contexts of $\scenS$ and we obtain that
$\bItrAZ \cdot v \;=\; b \cdot u$, an instance of Eq.~\eqref{eq:caseImarginal}.


 The forward direction of the result then follows from the validity of the initial inequality for $S$, which implies  $b \cdot u \geq 0$.
 Conversely, any vertex $u$ of $S$ can be extended to some vertex $v$ of $T$ by picking any outcome for $A$.
    Validity of the final inequality gives $\bItrAZ \cdot v \geq 0$, hence the same instance of Eq.~\eqref{eq:caseImarginal} implies $b \cdot u \geq 0$.
\end{proof}

\begin{proposition}\label{prop:valid-MLcaseII}
Let the scenario $\scenT$ be an extension of the scenario $\scenS$ by the addition of a single measurement $A$,
and let $b \cdot p \geq 0$ be an inequality over $\scenS$ (written in effective form).
If $A$ is compatible with all the measurements effectively contributing to the inequality $b \cdot p \geq 0$,
then $b \cdot p \geq 0$ is a valid inequality for $S$
if and only if
$\bZakak \cdot q \geq 0$ is a valid inequality for $T$.
\end{proposition}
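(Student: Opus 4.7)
The plan is to follow the template of Proposition~\ref{prop:valid-MLcaseI}, evaluating $\bZakak \cdot v$ on an arbitrary vertex $v$ of $T$ via a case split on the outcome that the underlying global assignment $\lambda$ gives to the new measurement $A$. Writing $u = v|_\scenS$ for the induced vertex of $S$, the crucial identity I aim to establish is that $\bZakak \cdot v = b \cdot u$ when $\lambda(A) = a_k$, and $\bZakak \cdot v = 0$ otherwise. The second clause is essentially immediate from the construction in Eqs.~\eqref{eq:bZakak_old}--\eqref{eq:bZakak_+A}: the only non-zero coefficients sit on contexts in $\CmcA \cup \CpcA$ (those containing $A$), and only at components where $A$ takes the value $a_k$; so if $v$ assigns $\lambda(A) \neq a_k$, every surviving component receives coefficient zero.

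For the case $\lambda(A) = a_k$, the evaluation reduces to summing $b(u|_U \mid U)$ over $\{A\} \cup U \in \CmcA \cup \CpcA$, and I would argue that this equals $b \cdot u$. The hypothesis that $A$ is compatible with every effectively contributing measurement places the set $E$ of such measurements entirely inside $J$, the neighbourhood of $A$. Since $b$ is in effective form, $b(u|_U \mid U)$ depends only on $u|_{U \cap E}$, and the remark accompanying Eq.~\eqref{eq:bZakak_+A} ensures this quantity is well-defined: any maximal context $C \in \CS$ extending $U$ does so by adjoining measurements outside $J$, hence outside $E$, so that $b(u|_C \mid C) = b(u|_U \mid U)$. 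Exploiting the freedom provided by no-disturbance to rewrite $b$, I can then match the two sums term by term, recovering $\sum_{C \in \CS} b(u|_C \mid C) = b \cdot u$.

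The two directions of the proposition follow at once from this identity. For the forward direction, validity of $b \cdot p \geq 0$ on $S$ gives $b \cdot u \geq 0$ for every vertex $u$ of $S$; since every vertex $v$ of $T$ satisfies $\bZakak \cdot v \in \{0,\, b \cdot (v|_\scenS)\}$, validity on $T$ follows. Conversely, any vertex $u$ of $S$ extends to a vertex $v$ of $T$ by choosing $v(A) = a_k$; applying the assumed validity of $\bZakak \cdot q \geq 0$ to this extension forces $b \cdot u \geq 0$.

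The main obstacle I anticipate is the bookkeeping in the middle step: making precise the correspondence between the maximal contexts of $\scenT$ containing $A$ and the effective structure of the maximal contexts of $\scenS$. The compatibility of $A$ with all of $E$ ensures that nothing in $E$ is lost when passing from $\CS$ to $\CmcA \cup \CpcA$, but turning this intuition into a clean term-by-term match will likely require explicitly invoking the no-disturbance equalities to select an appropriate representative of $b$ before comparing coefficients.
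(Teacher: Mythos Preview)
Your proposal is correct and follows essentially the same route as the paper: a case split on whether the vertex $v$ assigns $a_k$ to $A$, yielding $\bZakak \cdot v = b \cdot u$ in the first case (via Eq.~\eqref{eq:bZakak_+A} and the effective form of $b$) and $\bZakak \cdot v = 0$ in the second, followed by the same converse extending $u$ to $v$ with $v(A)=a_k$. You are in fact more careful than the paper about the bookkeeping that identifies $\sum_{\{A\}\cup U} b(u|_U\mid U)$ with $b\cdot u$; the paper simply asserts this equality, implicitly relying on the effective-form assumption and the remark after Eq.~\eqref{eq:bZakak_+A}, whereas you rightly flag that making the context correspondence precise requires the no-disturbance freedom in the representative of $b$.
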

\begin{proof}
Let $v$ be a vertex of $T$, and $u = v|_\scenS$ be the corresponding restriction to a vertex of $S$.
The only non-zero coefficients in $\bZakak$ correspond to contexts in $\CmcA \cup \CpcA$.

If the vertex $v$ assigns outcome $a_k$ to $A$, then the surviving coefficient at each maximal context $\{A\} \cup U$ is
$\bZakak(v|_{\{A\}\cup U} \mid A \, U) = b(u|_U \mid U)$ by Eq.~\eqref{eq:bZakak_+A}.
We thus obtain $\bZakak \cdot v \;=\; b \cdot u$, and so $\bZakak \cdot v \geq 0$ by validity of the initial inequality for $S$.

If $v$ assigns any other outcome to $A$, then all the surviving coefficients are zero by Eqs.~\eqref{eq:bZakak_old} and~\eqref{eq:bZakak_+A}. Thus, the vertext saturates the inequality, $\bZakak \cdot v = 0$.


Conversely, given a vertex $u$ of $S$, it can be extended to a vertex $v$ of $T$ by assigning outcome $a_k$ to $A$,
and $b \cdot u \geq 0$ follows from $\bZakak \cdot v \;=\; b \cdot u$ and validity of the final inequality.
\end{proof}

\subsubsection{Facetness of the lifted inequalities}


We now move to the core of the measurement lifting result: to show that the constructed inequalities for $T$ are facet-defining. For both cases, the proof follows the general plan outlined  in Sec.~\ref{sec:Proof outline}.

Given the initial inequality $b \cdot p \geq 0$ which supports a facet of $S$, we pick
\[B_S= \{v_1, \ldots, v_{d_S + 1}\}\]
an affine basis for $S$ chosen so that the first 
$d_S$ vertices saturate the inequality.
Any vertex of $S$ extends to a vertex of $T$ by fixing an outcome for the new measurement $A$.
For each outcome $a_k \in \OA$, the corresponding extension of vertex $v_i$ is denoted $v_{i}^{(k)}$.
Thus, the set 
\[V_k=\{v_{1}^{(k)}, \ldots, v_{d_S + 1}^{(k)}\}\]
contains the vertices of $T$ that are extensions of those in $B_S$ by fixing the outcome $a_k$ for $A$.

There is a one-to-one correspondence between vertices of $S$ and vertices of $T$ that have a fixed outcome $a_k$ for $A$.
This correspondence preserves and reflects affine dependencies.
From the fact that $B_S$ is an affine basis for $S$, we can therefore conclude that each set $V_k$ is affinely independent 
and is, moreover, an affine basis for all the vertices of $T$ that assign outcome $a_k$ to $A$.
That is, any vertex that assigns outcome $a_k$ to $A$ is an affine combination of the vertices in $V_k$.
Consequently, the union $B_T = \bigcup_{k} V_k$ affinely spans the polytope $T$.
In other words, it is an overcomplete affine basis for $T$, where `overcomplete' indicates that its elements are not necessarily affinely independent, and so some of them could be discarded with the remaining set still affinely spanning $T$.

\begin{theorem}
\label{sec:theorem 1}
Let the scenario $\scenT$ be an extension of the scenario $\scenS$ by the addition of a single measurement $A$,
and let  $b \cdot p \geq 0$ be an inequality over $\scenS$ (written in effective form). Then,
\begin{enumerate}[leftmargin=*,align=left,label=\Roman*.,widest*=2]
\item if $A$ is incompatible with some measurement that effectively contributes to the inequality, then $b \cdot p \geq 0$ supports a facet of $S$ if and only if $b_{I,tr(A),0} \cdot q \geq 0$ supports a facet of $T$;
\item if $A$ is compatible with all the measurements that effectively contribute to the inequality, then for any outcome $a_k$ of $A$, $b \cdot p \geq 0$ supports a facet of $S$ if and only if $\bZakak \cdot q \geq 0$ supports a facet of $T$.
\end{enumerate}
\end{theorem}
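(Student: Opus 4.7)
The plan is to follow the general template of Sec.~\ref{sec:Proof outline}, working from an affine basis $B_S = \{v_1, \ldots, v_{d_S+1}\}$ of $S$ whose first $d_S$ elements saturate $b \cdot p \geq 0$. Since each extension-at-$a_k$ map $v \mapsto v^{(k)}$ is an affine bijection from $S$ onto the slice of vertices of $T$ that assign $a_k$ to $A$, the sets $V_k = \{v_1^{(k)}, \ldots, v_{d_S+1}^{(k)}\}$ each affinely span their respective slice, and together they constitute an overcomplete affine basis $B_T = \bigcup_k V_k$ of the whole polytope $T$. The task thus reduces to reducing $B_T$ to an affinely spanning set with at most one non-saturating vertex, since then $d_T$ of the remaining elements must be affinely independent and saturating.

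For Case~I, Proposition~\ref{prop:valid-MLcaseI} gives $\bItrAZ \cdot v_i^{(k)} = b \cdot v_i$, so each $V_k$ has exactly one non-saturating vertex, namely $v_{d_S+1}^{(k)}$. The key step is to eliminate $v_{d_S+1}^{(l)}$ for every $l$ different from some reference outcome $a_k$, using the affine dependency induced by the incompatibility of $A$ with a measurement $M \in \XS$ that effectively contributes to $b$. Generalising Eq.~\eqref{eq:affinedependence-incompatible}, one has, for any assignment $\lambda$ and any outcomes $a_k, a_l$ of $A$ and $m_i, m_j$ of $M$, the relation $v_{\lambda[A \mapsto a_l,\, M \mapsto m_i]} = v_{\lambda[A \mapsto a_k,\, M \mapsto m_i]} - v_{\lambda[A \mapsto a_k,\, M \mapsto m_j]} + v_{\lambda[A \mapsto a_l,\, M \mapsto m_j]}$. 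Instantiating this with $\lambda$ the assignment underlying $v_{d_S+1}$ expresses $v_{d_S+1}^{(l)}$ as an affine combination of $v_{d_S+1}^{(k)}$ and the two extensions of $v_{d_S+1}[M\mapsto m_j]$; expanding the latter in $B_S$ via the affinity of the extension map then rewrites $v_{d_S+1}^{(l)}$ as an affine combination of vertices in $V_k \cup V_l$, allowing it to be dropped from $B_T$.

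For Case~II, Proposition~\ref{prop:valid-MLcaseII} gives $\bZakak \cdot v_i^{(k)} = b \cdot v_i$ and $\bZakak \cdot v_i^{(j)} = 0$ for every $j \neq k$, so only $v_{d_S+1}^{(k)}$ fails to saturate among all vertices of $B_T$: no elimination is required, and the remaining saturating elements of the affinely spanning set $B_T$ must contain $d_T$ affinely independent ones. For the converse direction in both cases, I proceed via the restriction map $T \to S$: given $d_T$ affinely independent saturating vertices of the lifted inequality, their restrictions to $\scenS$ are saturating vertices of $b \cdot p \geq 0$, by Eq.~\eqref{eq:caseImarginal} in Case~I and by its Case~II analogue applied to vertices that assign $a_k$ to $A$; combined with the affine bijectivity of the extension map, this produces $d_S$ affinely independent saturating vertices of $S$.

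The main technical obstacle lies in the elimination step of Case~I: after collecting terms, the derived relation has coefficient $1 - \alpha_{d_S+1}$ on $v_{d_S+1}^{(l)}$ itself, where $\alpha_{d_S+1}$ is the weight of $v_{d_S+1}$ in the $B_S$-expansion of $v_{d_S+1}[M \mapsto m_j]$, so I must choose $M$, $m_j$, and possibly $B_S$ itself in such a way that $\alpha_{d_S+1} \neq 1$ — invoking crucially both that $M$ effectively contributes to $b$ and that $M$ is incompatible with $A$. It is exactly this manoeuvre that is unavailable when $A$ is compatible with every effectively contributing measurement, which both explains why Case~I fails in that regime and motivates the qualitatively different saturation pattern of the $\bZakak$ construction in Case~II.
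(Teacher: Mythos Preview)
Your forward direction is essentially the paper's proof: the same overcomplete basis $B_T = \bigcup_k V_k$, the same incompatibility relation (your four-term identity is exactly the paper's Eq.~\eqref{eq:16} rearranged), the same expansion of $w^{(k)}$ and $w^{(l)}$ with identical coefficients, and the same obstruction $\alpha_{d_S+1}=1$ resolved by appealing to effective contribution of $M$ (the paper's Sec.~\ref{anomaly}). Case~II likewise matches.

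Your converse sketch, however, has a genuine gap. Restricting $d_T$ affinely independent saturating vertices of $T$ to $\scenS$ does not by itself yield $d_S$ affinely independent saturating vertices of $S$: the restriction map is many-to-one and can collapse affine independence. In Case~I one can rescue the idea by a fibre-dimension count (the marginalisation $\pi\colon\operatorname{aff}(T)\to\operatorname{aff}(S)$ is an affine surjection with fibres of dimension $d_T-d_S$, whence $\dim\pi(F)\geq d_S-1$, and $\pi(F)$ equals the face supported by $b$ since $\bItrAZ\cdot v=b\cdot\pi(v)$ for \emph{every} vertex $v$). But in Case~II this fails: a saturating vertex $v$ of $T$ that does \emph{not} assign $a_k$ to $A$ satisfies $\bZakak\cdot v=0$ automatically, while $b\cdot v|_{\scenS}$ is unconstrained, so $\pi(F)$ is not the face of $S$ supported by $b$, and there is no a~priori lower bound on the number of saturating vertices of $T$ that assign $a_k$. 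The paper handles both cases uniformly with a different device: writing an arbitrary $a_k$-assigning vertex $w$ as an affine combination over $F\cup\{v\}$ and then applying the linear ``switch $A$'s outcome to $a_k$'' map, which preserves the affine relation and sends every vertex in $F^{(\lnot k)}$ into $F^{(k)}$; this shows $F^{(k)}\cup\{v\}$ already spans the whole $a_k$-slice, which is affinely isomorphic to $S$. You need this (or an equivalent) step; invoking ``affine bijectivity of the extension map'' alone does not supply it.
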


\begin{proof}

We establish each case separately.

\paragraph*{Case I}
First, assume that $b \cdot p \geq 0$ supports a facet of $S$.
We have to show that the inequality $\bItrAZ \cdot q \geq 0$ is saturated by $d_T$ affinely independent vertices and no more. 
Already from the proof of Proposition~\ref{prop:valid-MLcaseI}, or even from Eq.~\eqref{eq:caseImarginal},
we know that $\bItrAZ \cdot v_i^{(k)} = b \cdot v_i$ for all $v_i \in B_S$ and $a_k \in \OA$.
Hence, in each $V_k$, all but the last element saturate the inequality $\bItrAZ \cdot q \geq 0$.

Fixing some $k$, we show next that for all $l \neq k$, the (non-saturating) vertex $v_{d_S+1}^{(l)}$ can be written as an affine combination of $V_l \cup V_k \setminus \{v_{d_S+1}^{(l)}\}$.
This implies that the vertex can be dropped from $B_T$ without affecting its affine span, i.e., the fact that it spans the whole polytope $T$.
Doing so for all $l \neq k$, the only non-saturating point that remains within $B_T$ is $v_{d_S+1}^{(k)}$.
Consequently, among any set of $d_T + 1$ affinely independent points in (the reduced) $B_T$, $d_T$ of them must saturate the inequality $\bItrAZ \cdot q \geq 0$,
implying that the inequality is facet-defining for $T$.

In establishing the above-mentioned affine elimination,
the key idea is to use an affine relation akin to Eq.~\eqref{eq:affinedependence-incompatible} with respect to the measurement $A$ and some effectively contributing measurement incompatible with $A$,
and then expanding on the affine bases $V_k$ and $V_l$.
We now see this in detail.

From the assumption, there is a measurement $M$ that effectively contributes to the initial inequality and is not compatible with $A$. 
Let $w^{(k)}$ be a vertex of $T$ obtained from $v_{d_S+1}^{(k)}$ by changing the outcome of the measurement $M$ to something other than what it is for $v_{d_S+1}^{(k)}$, keeping the rest of the measurement outcomes the same.
Likewise, perform the same change of outcome of $M$ to obtain $w^{(l)}$ from $v_{d_S+1}^{(l)}$.
Considering the four vectors
$\{v_{d_S+1}^{(k)},w^{(k)},v_{d_S+1}^{(l)},w^{(l)}\}$,
we establish that the following equation holds:
\begin{equation}
w^{(k)} - v_{d_S+1}^{(k)}  = w^{(l)} - v_{d_S+1}^{(l)}.
\label{eq:16}
\end{equation}
This has the same form as Eq.~\eqref{eq:affinedependence-incompatible}, and holds for much the same reason, deriving from the incompatibility between the measurements $M$ and $A$.
It follows from the way $w^{(k)}$ and $w^{(l)}$ are constructed from  $v_{d_S+1}^{(k)}$ and $v_{d_S+1}^{(l)}$, respectively:
changing the outcome of $M$ means that no change is made to the components corresponding to contexts in $\CmcA \cup \CpcA$,
since $M$ cannot belong to any such context, incompatible as it is with $A$, i.e.,
\begin{align}
 v_{d_S+1}^{(k)}(\mcA,\pcA) &= w^{(k)}(\mcA,\pcA),  \label{eq:17}
\\
 v_{d_S+1}^{(l)}(\mcA,\pcA) &= w^{(l)}(\mcA,\pcA); \label{eq:18}
\end{align}
moreover,
the fact that $v_{d_S+1}^{(k)}$  and $v_{d_S+1}^{(l)}$ only differ on the outcome they assign to $A$ means that
\begin{equation}
 v_{d_S+1}^{(k)}(\old) = v_{d_S+1}^{(l)}(\old),  \label{eq:19}
\end{equation}
and for the same reason
\begin{equation}
 w^{(k)}(\old) = w^{(l)}(\old). \label{eq:20}
\end{equation}
Eq.~\eqref{eq:16} can then be justified as follows:
the difference between the two terms on the left-hand side cancels out the
components over $\CmcA \cup \CpcA$
due to Eq.~\eqref{eq:17}, and similarly on the right-hand side due to Eq.~\eqref{eq:18};
for the components over $\Cold$, the equation follows by subtracting Eq.~\eqref{eq:19} from Eq.~\eqref{eq:20}. 

Now, due to the symmetry between $V_k$ and $V_l$ and the symmetric construction of $w^{(k)}$ and $w^{(l)}$, we can expand these vectors in the respective affine basis ($V_k$ and $V_l$) with the same coefficients, i.e.,
\begin{align}
 w^{(k)} &= \sum_{i=1}^{d_S+1}\lambda_iv_i^{(k)}, \label{eq:21}
\\
 w^{(l)} &= \sum_{i=1}^{d_S+1}\lambda_iv_i^{(l)}, \label{eq:22}
\end{align}
where $\sum_i \lambda_i = 1$.
This results in
\begin{align}
 w^{(k)} - v_{d_S+1}^{(k)} &= \sum_{i=1}^{d_S}\lambda_iv_i^{(k)} + (\lambda_{d_S+1}-1)v_{d_S+1}^{(k)}
\\
w^{(l)} - v_{d_S+1}^{(l)} &= \sum_{i=1}^{d_S}\lambda_iv_i^{(l)} + (\lambda_{d_S+1}-1)v_{d_S+1}^{(l)}.
\end{align}
Given Eq.~\eqref{eq:16}, we can equate the right-hand sides, obtaining
\begin{equation}
 \sum_{i=1}^{d_S}\lambda_iv_i^{(k)} + (\lambda_{d_S+1}-1)v_{d_S+1}^{(k)} = \sum_{i=1}^{d_S}{\lambda_i}v_i^{(l)} + (\lambda_{d_S+1}-1)v_{d_S+1}^{(l)}.
\end{equation}
Provided $\lambda_{d_S+1} \neq 1$, we can rearrange this as
\begin{equation}
    v_{d_S+1}^{(l)} = v_{d_S+1}^{(k)} - \sum_{i=1}^{d_S}\frac{\lambda_iv_i^{(k)}}{1-\lambda_{d_S+1}}  + \sum_{i=1}^{d_S}\frac{\lambda_iv_i^{(l)}}{1-\lambda_{d_S+1}},
\end{equation} 
exhibitting $v_{d_S+1}^{(l)}$ as an affine combination of $V_k \cup V_l \setminus \{v_{d_S+1}^{(l)}\}$, as desired.

The last step requires that $\lambda_{d_S+1} \neq 1$.  
A value of $\lambda_{d_S+1} = 1$ precludes $v_{d_S+1}^{(l)}$ being expressed as an affine combination of $V_k \cup V_l \setminus \{v_{d_S+1}^{(l)}\}$, thwarting the proof strategy.
One can work around this problem
by picking a different non-saturating vertex of the original inequality as the last element $v_{d_S+1}$ of $B_S$,
or by picking a different outcome of $M$ to which its value changes in going from $v_{d_S+1}^{(k)}$ to $w^{(k)}$.
We defer a careful analysis to Sec.~\ref{anomaly},
where we show that it is always possible to make a choice that avoids the troublesome $\lambda_{d_S+1}=1$
whenever the measurement $M$ effectively contributes to the initial inequality and is incompatible with $A$.


\paragraph*{Case II} We now consider the case when $A$ is compatible with all the effectively contributing measurements to the initial inequality $b \cdot p \geq 0$.
Fix an outcome $a_k$ of $A$ and consider the inequality $\bZakak \cdot q \geq 0$.
In this case, we have that for all $l \neq k$ every vector in $V_l$ saturates the inequality, since as we have seen the left-hand side evaluates to zero.
Only in $V_k$ does the $(d_S + 1)${-th} vertex not saturate the inequality.
The vertex $v_{d_S+1}^{(k)}$ is thus the only element of $B_T$ that does not saturate the inequality.
Since $B_T$ affinely spans the polytope $T$, there must be $d_T + 1$ affinely independent vertices within it.
As only one element of $B_T$ does not saturate the inequality, there is a set of $d_T$ affinely independent vertices that do. Hence, $b_{0,a_k,a_k} \cdot q \geq 0$ represents a facet-defining inequality for $T$.

\paragraph*{Converse}
This completes our proof of measurement lifting in one direction. 
Now we go about proving the converse statements, i.e. that if the inequality \eqref{eq:13} or \eqref{eq:14} is facet-defining for $T$, then the original inequality $b \cdot p\geq 0$ is facet-defining for $S$.
Notice that earlier (in Sec.~\ref{subsec:recovering}) we already showed how to recover $b$
from the given inequalities \eqref{eq:13} or \eqref{eq:14}. 

Given $\bItrAZ \cdot q \geq 0$ or $\bZakak \cdot q \geq 0$ facet-defining,
let $F$ be the set of vertices that saturate it, and partition it into two sets
$F^{(k)}$ and $F^{(\lnot k)}$: those that assign $a_k$ to $A$ and those that do not.
A consequence of the assumed facet-defining nature of the inequality is that
for any non-saturating vertex $v$ (i.e.\ any $v \notin F$), the set $F \cup \{v\}$ affinely spans the whole polytope $T$.
Given the specific inequalities under consideration, one can always choose such a $v$ that assigns outcome $a_k$ to $A$.

Now, let $w$ be any vertex that assigns $a_k$ to $A$.
Expand it as an affine combination of $F \cup \{v\}$:
\begin{equation}
   w = \lambda v +   \sum_{v_i \in F^{(k)}}\lambda_i v_i + \sum_{v_j\in F^{(\lnot k)}}\lambda_j v_j.
\end{equation}
Now, for all vertices that appear in the equation, switch the outcome of $A$ to $a_k$.
This transformation preserves the equality because it is a linear map on the space of correlations:
for contexts in $\CmcA \cup \CpcA$, each component of the form $p(a_k \, s \mid A \, U)$ takes a value given by the linear expression $\sum_j p(a_j \, s \mid A \, U)$ and the remaining components become $0$; for contexts in $\Cold$, the components are unchanged.
It yields
\begin{equation}
   w = \lambda v +   \sum_{v_i \in F^{(k)}}\lambda_i v_i + \sum_{v_j\in F^{(\lnot k)}}\lambda_j v_j[A \mapsto a_k].
   \label{eq:affineFkv}
\end{equation}
Notice that the transformation leaves the vertices in $F^{(k)}$ as well as $v$ and $w$ unchanged,
while those in $F^{(\lnot k)}$ get mapped to vertices in $F^{(k)}$.
Hence, Eq.~\eqref{eq:affineFkv} shows that $w$ can be written as an affine combination of vertices in $F^{(k)} \cup \{v\}$.

Given how $w$ was specified, we have established that $F^{(k)} \cup \{v\}$ affinely spans the set of all vertices of $T$ that assign $a_k$ to $A$.
This set determines a subpolytope of $T$ that is isomorphic to $S$ by the correspondence that forgets the outcome of $A$. 
The vertices of $S$ obtained in this way from $F^{(k)}$ saturate the inequality over $\scenS$ determined by $b$.
We have shown that it is enough to add a single non-saturating vertex, namely $v|_\scenS$, in order to affinely span the whole of $S$. Thus the inequality determines a face of co-dimension $1$, i.e.\ a facet, of $S$.
\end{proof}

\subsubsection{The curious case of {$\lambda_{d_S+1} = 1$}}
\label{anomaly}


We now discharge the issue left not fully resolved in the proof of case I above,
on why one can always avoid
a situation when $\lambda_{d_S+1}=1$.
We do this by examining what such a situation implies about the original inequality being lifted.

Observe that Eqs.~\eqref{eq:21} and~\eqref{eq:22} actually arise at the level of the original scenario $\scenS$ already.
Let $m$ be the outcome of $M$ used in the change from $v_{d_S+1}^{(k)}$ to $w^{(k)}$ in the proof of case I.
That is, adopting the notation from Sec.~\ref{sec:effectivecontribution}, one has $w^{(k)} = v_{d_S+1}^{(k)}[M\mapsto m]$.
At the level of $\scenS$, one may consider a similar change $w = v_{d_S+1}[M\mapsto m]$,
and expanding $w$ as an affine combination of $B_S$ gives
\begin{equation}
w = \sum_{i=1}^{d_S+1}\lambda_i v_i.  
\end{equation}
The case $\lambda_{d_S + 1} = 1$ implies that
\begin{equation}
    w       
    = v_{d_S+1} + \sum_{i=1}^{d_S}{\lambda_i}v_i, 
\end{equation}
where $\sum_{i=1}^{d_S} \lambda_i = 0$.
Under the initial inequality, i.e.\ multiplying the vector of coefficients $b$, this yields
\begin{equation}
    b \cdot w 
   \; = \;
   b \cdot v_{d_S+1} + \sum_{i=1}^{d_S}{\lambda_i}b \cdot v_i
   \; = \;
   b \cdot v_{d_S+1},
\end{equation}
where the last equality follows from the fact that all the vertices $\{v_i\}_{i=1}^{d_S}$ saturate the inequality $b \cdot p \geq 0$.
In other words, the vertex  $w$ 
attains the same value for the inequality as the vertex $v_{d_S+1}$.

Now, note that, in picking the affine basis $B_S$, the last vertex $v_{d_S+1}$ could be chosen arbitrarily to be any non-saturating vertex of the inequality.
Moreover, the outcome $m$ could also be freely chosen.
The fact that the measurement $M$ effectively contributes to the inequality (see Sec~\ref{sec:effectivecontribution}) means that there exists a vertex $v$ and an outcome $m$ of $M$
such that $b \cdot v \neq b \cdot v[M \mapsto m]$.
Without loss of generality, such $v$ can always be chosen to be non-saturating, because one of $v$ or $v[M\mapsto m]$ must not saturate the inequality, as they attain different values.
This guarantees that it is possible to pick a non-saturating vertex $v_{d_S+1}$ and an outcome $m$ of $M$ in such a way that the problematic $\lambda_{d_S+1}=1$ does not occur.

In turn, if the measurement $M$ does not effectively contribute to the inequality,
then $\lambda_{d_S+1} = 1$ for all choices of $v_{d_S+1}$ and outcome $m$.
We illustrate such a situation through a (paradigmatic) example where the newly added measurement is compatible with all the measurements that effectively contribute to the initial inequality,
and check that the case I construction indeed fails to lift it to a facet-defining inequality.

Consider the scenario given by the compatibility graph shown in Fig.~\ref{fig: fig2}(a).
The following is a facet-defining NC inequality to which measurement `4' does not effectively contribute:
\begin{multline}
p(010|014)+ p(011|014) + p(100|014) + p(101|014) \\
+ p(01|12) + p(10|12) + p(01|23) + p(10|23)\\
- p(01|30) - p(10|30)  \geq 0.  \label{eq:31}
\end{multline}
In fact, this inequality is a lifting of the CHSH inequality (with the classical bound set to zero) over the induced sub-scenario with measurement set $\{0,1,2,3\}$ through the addition of measurement `4'.
Notice how the measurement `4' is being `traced out', and so the inequality is still, in essence, the CHSH inequality:
\begin{multline}
\ICHSH = p(01|01) + p(10|01)  + p(01|12) + p(10|12) \\ +   p(01|23) + p(10|23)  - 
p(01|30) - p(10|30) \geq 0.  \label{eq:32}
\end{multline}
Now, consider the extension of the scenario in Fig.~\ref{fig: fig2}(a) to the scenario in Fig.~\ref{fig: fig2}(b).
The newly added measurement `5' is compatible with all the measurements that effectively contribute to the inequality \eqref{eq:31}.
This indicates that, in order to lift the inequality, one must use case II.
Indeed, doing so gives rise to two facet-defining inequalities, one for each outcome $a_k \in \OO{5} = \{0,1\}$:
\begin{multline*}
I_{a_k} = p(10a_k|015) + p(01a_k|015) + 
  p(10a_k|125) + \\ p(01a_k|125) + p(10a_k|235) + p(01a_k|235) \\ - p(01a_k|305) - p(10a_k|305) \geq 0.
\end{multline*}
Had we tried to apply case I, which would amount to tracing measurement `5' out, we would have obtained the inequality
\begin{equation*}
    \ICHSH = \sum_k I_{a_k} \geq 0,
\end{equation*}
which is still, in essence, the CHSH inequality on the induced sub-scenario with measurement set $\{0,1,2,3\}$.
This is a valid but not a facet-defining NC inequality for the scenario in Fig.~\ref{fig: fig2}(b).

The case just described exemplifies what happens in general:
in situations that fall within the scope of case II, the construction of case I leads to a valid but non-facet-defining inequality, which can be written as a linear combination of the lifted inequalities obtained through case II.

\begin{figure}
    \centering
    \includegraphics[scale = 0.45]{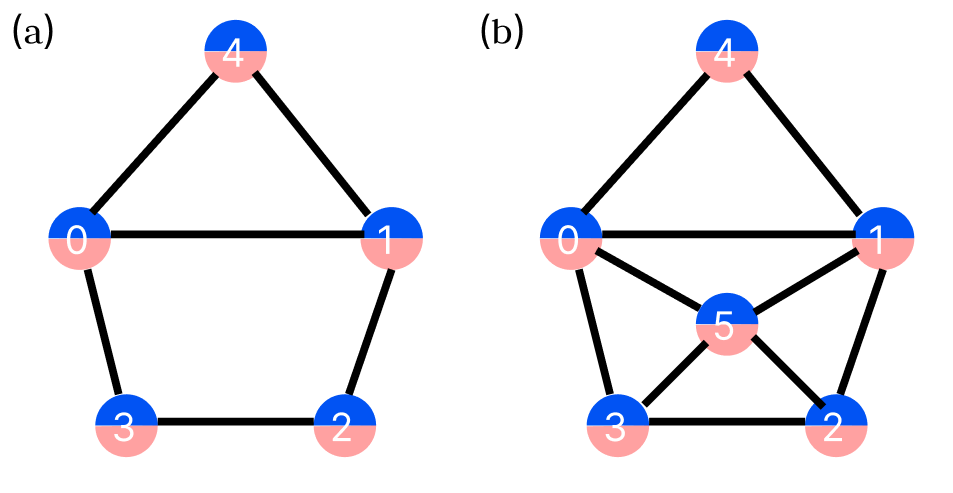}
    \caption{(a) Compatibility graph of the initial scenario with the node labelled `4' representing a measurement added to the CHSH scenario. Bi-coloured nodes represent dichotomic measurements and edges represents compatibility between measurements. (b) Compatibility graph of scenario obtained from (a) by adding a new measurement (node labelled `5') compatible with all the measurements effectively contributing to inequality \eqref{eq:31}.} 
\label{fig: fig2}
\end{figure}

\subsection{Outcome lifting}
\label{sec:outcome lifting}



We now move on to study lifting under outcome extensions.
Let $\scenS$ be extended to $\scenT$ by the addition of a new outcome $a_0$ for the existing measurement $A \in \XS$.
That is, in $\scenS$ the measurement $A$ has outcome set $\OSA=\{a_1, \ldots, a_m\}$,
which is extended in $\scenT$ to $\OTA=\{a_0\} \cup \OSA = \{a_0, a_1, \ldots, a_m\}$.

The maximal contexts of $\scenT$ are the same as those of $\scenS$.
The ambient vector space of correlations for $\scenT$ has components matching those for $\scenS$ and additional ones referring to the new outcome $a_0$. Namely, for each maximal context containing $A$, $\{A\} \cup U \in \CS$, and each joint outcome for the rest of the measurements, $s \in \OO{U}$, 
a vector $q$ on the space of correlations for $\scenT$ has an additional component $q(a_0 \, s \mid A \, U)$.

To exhibit lifting, we must first map an inequality over $\scenS$ into one over $\scenT$.
For this, we `club together' the outcome $a_0$ with some other pre-existing outcome $a_k$ in $\OSA$,
treating the new outcome as if it were $a_k$.
More explicitly, let 
\begin{equation}
 b \cdot p \geq 0 \label{eq:35}
\end{equation}
be the initial facet-defining inequality for $\scenS$.
By `clubbing together' with $a_k$ we mean that wherever a  probability 
$p(a_k \, s \mid A \, U)$ (for some $\{A\} \cup U\in \CS$ and $s \in \OO{C}$)
appears in the inequality \eqref{eq:35} we replace it by $q(a_0 \, s \mid A \, U) + q(a_k \, s \mid A \, U)$ in the new inequality over $\scenT$.

Let us carefully define the new inequality by giving its vector of coefficients.
From the vector of coefficients $b$ of the original ineqality over $\scenS$,
we construct two new vectors over the vector space of correlations for $\scenT$:
\begin{itemize}
    \item the vector $b_*$ is equal to the original vector $b$ padded with zeros on the new components, those that correspond to observing outcome $a_0$ for measurement $A$ (more precisely, to joint outcome assignments for maximal contexts containing the measurement $A$ which assign outcome $a_0$ to $A$).
    \item the vector $b_{a_k}$ is non-zero only on the new components (events assigning outcome $a_0$ to $A$),
    taking the same coefficient as $b$ does
    on the event obtained by changing $A$'s outcome to $a_k$:
    that is, for each $\{A\} \cup U$ a maximal context and $s \in \OO{U}$,
    \[b_{a_k}(a_0 \, s \mid A \, U) \;=\; b(a_k \, s \mid A \, U),\]
    and $b_{a_k}$ is zero at every other component.
\end{itemize}
The new vector of coefficients 
is $b_* + b_{a_k}$, yielding the following inequality over the space of correlations for $\scenT$:
\begin{equation}
 b_* \cdot q + b_{a_k} \cdot q \geq 0. \label{eq:36}
\end{equation}

Much like a case I measurement-lifted inequality could be understood through the marginalisation map that takes each correlation $q$ on $\scenT$ to the correlation $q|_\scenS$ on $\scenS$,
so one way to understand this inequality is through a \emph{coarse-graining} map from correlations on $\scenT$ to correlations on $\scenS$.
Given a correlation $q$ on $\scenT$, define a correlation $q_{k \equiv 0}$ on $\scenS$ by `clubbing together' outcomes $a_0$ and $a_k$ for $A$, as follows:
\begin{align*}
    q_{k \equiv 0}(a_k \, s \mid A \, U) = q(a_k \, s \mid A \, U) + q(a_0 \, s \mid A \, U)
\end{align*}
for $\{A\} \cup U \in \CS$ and $s \in \OO{U}$,
and $q_{k \equiv 0}(s \mid C) = q(s \mid C)$ in all other cases.
Then, the lifted inequality \eqref{eq:36} effectively tests for the original inequality \eqref{eq:35}
on the coarse-grained statistics.
That is, analogously to Eq.~\eqref{eq:caseImarginal}: for all correlations $q$ on $\scenT$,
\begin{equation*}
b_* \cdot q + b_{a_k} \cdot q \;=\; b \cdot q_{k \equiv 0} .
\end{equation*}
The mapping $q \mapsto q_{k \equiv 0}$ clearly preserves deterministic non-contextual correlations,
as it can be seen to work at the level of global outcome assignments.
These correlations are the vertices of the noncontextual polytopes.
The mapping is therefore a map of polytopes $T \to S$.
This observation is the key to prove the validity of the lifted inequality \eqref{eq:36},
which we now explicitly state (cf.\ Propositions~\ref{prop:valid-MLcaseI} and~\ref{prop:valid-MLcaseII}, the analogous statements for measurement lifting).

\begin{proposition}
Let the scenario $\scenT$ be an extension of the scenario $\scenS$ by the addition of a single outcome $a_0$ to an existing measurement $A$.
If $b \cdot p \geq 0$ is a valid inequality for $S$,
then $(b_*+b_{a_k}) \cdot q \geq 0$ is a valid inequality for $T$.
\end{proposition}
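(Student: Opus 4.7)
The plan is to exploit the coarse-graining map $q \mapsto q_{k \equiv 0}$ introduced just before the statement, together with the identity
\[
b_* \cdot q + b_{a_k} \cdot q \;=\; b \cdot q_{k \equiv 0}
\]
which is already asserted in the preamble. Since $T$ is the convex hull of its vertices, validity of a linear inequality over $T$ reduces to verifying it on each vertex of $T$, as noted in Sec.~\ref{sec:polytope}. So the first step is to fix an arbitrary vertex $v$ of $T$, corresponding to a global deterministic outcome assignment $\lambda \in \OO{\XT}$.

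Next, I would show that $v_{k \equiv 0}$ is itself a vertex of $S$. The natural way is to exhibit a deterministic assignment $\lambda' \in \OO{\XS}$ that induces it: take $\lambda'(B) = \lambda(B)$ for every $B \neq A$, and set $\lambda'(A) = a_k$ if $\lambda(A) = a_0$, and $\lambda'(A) = \lambda(A)$ otherwise. A direct check on each maximal context of $\scenS$ (splitting on whether the context contains $A$) shows that the deterministic correlation $v_{\lambda'}$ agrees with $v_{k \equiv 0}$ component-wise; here the `clubbing together' of $a_0$ and $a_k$ is what ensures the two normalisation-$1$ components of $v$ corresponding to $A$'s outcomes $a_0$ and $a_k$ collapse into a single normalisation-$1$ component of $v_{\lambda'}$ on $a_k$.

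With these two ingredients in place, the remaining step is a routine component-wise verification of the identity $(b_* + b_{a_k}) \cdot v = b \cdot v_{k\equiv 0}$. I would split into three kinds of components: contexts $C \in \CS$ not containing $A$ (where $b_*$ agrees with $b$ and $b_{a_k}$ vanishes, and $v_{k\equiv 0}$ matches $v$ on $C$); contexts $\{A\} \cup U \in \CS$ on joint outcomes of the form $a_j\,s$ with $j \notin \{0,k\}$ (where again $b_*$ matches $b$ and $b_{a_k}$ vanishes); and the pair of `clubbed' components $a_0\,s$ and $a_k\,s$ for such contexts, which together contribute $b(a_k\,s \mid A\,U)\,[q(a_0\,s \mid A\,U) + q(a_k\,s\mid A\,U)]$ on the left and exactly the same quantity on the right by definition of $q_{k\equiv 0}$.

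Conclusion is then immediate: applying validity of the original inequality to the vertex $v_{k\equiv 0}$ of $S$ yields $(b_* + b_{a_k}) \cdot v = b \cdot v_{k\equiv 0} \geq 0$, and since $v$ was an arbitrary vertex of $T$ the lifted inequality is valid on all of $T$. I do not anticipate a real obstacle here; the substance of the result lies in the structural observation that the coarse-graining map is a well-defined polytope map $T \to S$, and the only thing requiring care is the bookkeeping separating the `old' components from the new ones indexed by $a_0$.
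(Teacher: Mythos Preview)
Your proposal is correct and follows essentially the same approach as the paper: reduce to vertices, identify $v_{k\equiv 0}$ as a vertex of $S$ via the coarse-grained assignment $\lambda'$, and conclude from validity of the original inequality. The only cosmetic difference is that the paper organises the verification by a case split on whether $v$ assigns $a_0$ to $A$ (so that one of $b_* \cdot v$ or $b_{a_k} \cdot v$ vanishes outright), whereas you organise it by component type; both arrive at $(b_*+b_{a_k})\cdot v = b\cdot v_{k\equiv 0}$.
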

\begin{proof}
    Let $v$ be any vertex of the polytope $T$, corresponding to an global outcome assigment $\lambda \in \OX$.
    If it assigns to $A$ an outcome other than $a_0$, 
    then the second term in $b_* \cdot v + b_{a_k} \cdot v$ vanishes
    and the first reduces to $b \cdot v_{k \equiv 0}$ where $v_{k \equiv 0}$ is the vertex of $S$ with precisely the same underlying global outcome assignment $\lambda$ as $v$.
    If it assigns outcome $a_0$ to $A$, then the first term in $b_* \cdot v + b_{a_k} \cdot v$ vanishes and the second reduces to $b \cdot v_{k \equiv 0}$,
    where now  $v_{k \equiv 0}$ is the vertex of $S$ whose underlying global outcome assignment is $\lambda[A \mapsto a_k]$, i.e. it assigns outcome $a_k$ to $A$ while retaining the same outcomes as $\lambda$ for the other measurements.
    The result then follows from the validity of $b \cdot p \geq 0$.
\end{proof}

While the inequality \eqref{eq:36} is always valid for $T$, it is not always a lifting of the initial inequality \eqref{eq:35}, as it may fail to be facet-defining.
We now describe a (general) situation in which it fails to lift.

Suppose that the initial inequality \eqref{eq:35} is such that all non-zero coefficients in $b$ correspond to joint outcomes that assign the fixed outcome $a_k$ for $A$, i.e.~only components of the form $b(a_k \, s \mid A \, U)$ may be non-zero.
This renders it an inequality of the form \eqref{eq:14}, that is, $b = c_{0,a_k,a_k}$ for some vector of coefficients $c$ over the sub-scenario $\scenR$ obtained from $\scenS$ by dropping the measurement $A$.
So, the initial inequality over $\scenS$ is itself a lifting from the smaller scenario  $\scenR$ via case II measurement lifting.
If one then chooses to club $a_0$ with $a_k$,
one ends up getting that
$b_* = c_{0,a_k,a_k}$
(where now the right-hand side represents a case II measurement lifting of the inequality determined by $c$ directly from $\scenR$ to $\scenT$),
and
$b_{a_k} = c_{0,a_0,a_0}$
(where the right-hand side also represents a case II measurement lifting directly from $\scenR$ to $\scenT$).
Consequently, the candidate lifted inequality \eqref{eq:36} would read
\begin{equation*}
    c_{0,a_k,a_k} \cdot q + c_{0,a_0,a_0} \cdot q \geq 0.
\end{equation*}
This is clearly not a facet-defining inequality for $T$, since by Theorem~\ref{sec:theorem 1} each of the two summands is itself a facet-defining inequality,
obtained via case II measurement lifting from $\scenR$ to $\scenT$.
In our outcome lifting result, we must therefore exclude such cases.
Note that such an inequality determined by $b = c_{0,a_k,a_k}$ (case II measurement lifted from $\scenR$ to $\scenS$)
can be lifted to $\scenT$ if we choose to club $a_0$ with a different outcome $a_j \neq a_k$.
In that case, however, the vector $b_{a_j}$ is equal to zero and the lifted inequality is simply
$c_{0,a_k,a_k} \cdot q \geq 0$, which can also be obtained via case II measurement lifting directly from $\scenR$ to $\scenT$.

With the proviso of excluding the situation discussed in the previous paragraph,
we now prove the facet-defining nature of the noncontextuality inequality \eqref{eq:35} in all other situations
(cf.\ Theorem~\ref{sec:theorem 2} for measurement lifting).

\begin{theorem}
\label{sec:theorem 2}
Let the scenario $\scenT$ be an extension of the scenario $\scenS$ by the addition of a single outcome $a_0$ to an existing measurement $A$.
If $b \cdot p \geq 0$ is a facet-defining NC inequality for $\scenS$, then for any pre-existing outcome $a_k$ of $A$ for which $b \neq c_{0,a_k,a_k}$, the inequality $(b_*  + b_{a_k}) \cdot q \geq 0$ is a facet-defining NC inequality for $\scenT$.
\end{theorem}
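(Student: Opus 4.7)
My plan follows the common template of Sec.~\ref{sec:Proof outline}, adapted to outcome lifting. The crucial observation is that the lifted inequality can be rewritten as $b \cdot \pi(q) \geq 0$, where $\pi \colon T \to S$ is the coarse-graining map $q \mapsto q_{k \equiv 0}$ introduced in Sec.~\ref{sec:outcome lifting}, and moreover that it is invariant under the affine involution $\psi \colon T \to T$ which swaps $q(a_0\, s \mid A\, U)$ with $q(a_k\, s \mid A\, U)$ for every $\{A\} \cup U \in \CS$ and every $s \in \OO{U}$, acting as the identity on all other coordinates. This $\psi$-symmetry is immediate from the construction of $b_* + b_{a_k}$, whose coefficients on the $a_0$- and $a_k$-components coincide, and it implies that $\psi$ preserves the set of vertices of $T$ saturating the lifted inequality.

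Next, I would pick an affine basis $B_S = \{v_1, \ldots, v_{d_S+1}\}$ of $S$ whose first $d_S$ elements are affinely independent vertices of the facet supported by $b \cdot p \geq 0$, and whose last element $v_{d_S+1}$ is a non-saturating vertex with $v_{d_S+1}(A) \neq a_k$. Setting $V_1 = B_S$ (embedded as vertices of $T$) and $V_2 = \psi(B_S)$, both are affinely independent sets of $d_S + 1$ vertices; $V_1$ is an affine basis of the face $\{q(a_0\,\cdot) = 0\}$ of $T$ (naturally identified with $S$), and $V_2$ is an affine basis of the face $\{q(a_k\,\cdot) = 0\}$ (its $\psi$-image). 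Because every vertex of $T$ assigns exactly one outcome to $A$ and therefore lies in at least one of these two faces, the union $B_T = V_1 \cup V_2$ affinely spans $T$.

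It then remains to count non-saturating vertices in $B_T$. For $i \leq d_S$, the vertex $v_i$ saturates the lifted inequality (being in the original facet) and so does $\psi(v_i)$ (by the $\psi$-symmetry established above). The only candidates for non-saturation are $v_{d_S+1}$ and $\psi(v_{d_S+1})$; but since $v_{d_S+1}(A) \notin \{a_0, a_k\}$ one has $\psi(v_{d_S+1}) = v_{d_S+1}$, so $B_T$ contains exactly one non-saturating vertex. Since $B_T$ affinely spans the $d_T$-dimensional polytope $T$, discarding this single vertex still leaves $d_T$ affinely independent saturating vertices, proving that $(b_* + b_{a_k}) \cdot q \geq 0$ supports a facet of $T$.

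The main obstacle I anticipate is the justification of the choice of $v_{d_S+1}$: I must argue that the exclusion hypothesis $b \neq c_{0,a_k,a_k}$ entails the existence of a non-saturating vertex of $S$ whose assignment to $A$ differs from $a_k$. By contrapositive, if every non-saturating vertex satisfied $v(A) = a_k$, then every vertex with $v(A) \neq a_k$ would saturate $b \cdot p = 0$; characterising the annihilator of the affine span of those vertices via the no-disturbance and normalisation constraints, one deduces that $b$ is equivalent to a coefficient vector supported solely on components of the form $p(a_k\, s \mid A\, U)$, i.e.\ to some $c_{0,a_k,a_k}$ with $c$ a coefficient vector over $\scenR = \scenS \setminus \{A\}$, contradicting the hypothesis. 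This converse manipulation is the technically delicate step and will rely on the effectivity conventions from Sec.~\ref{sec:effectivecontribution}.
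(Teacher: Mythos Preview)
Your proposal is correct and, in one important respect, cleaner than the paper's argument. The spanning set you build is the same as the paper's: your $V_1$ and $V_2 = \psi(V_1)$ are exactly the paper's $V^*$ and $V'$, and the $\psi$-involution is a neat way to package the construction and the saturation-preservation. The genuine difference is in how the two approaches handle the situation where every non-saturating vertex assigns $a_k$ to $A$. The paper treats this as a separate Case~2 and invokes the Remark to extract an effectively-contributing measurement $M$ incompatible with $A$, then runs the affine-elimination trick from the measurement-lifting proof (the analogue of Eq.~\eqref{eq:16}) to remove the second non-saturating vertex. You instead argue that this case simply cannot occur under the hypothesis $b \neq c_{0,a_k,a_k}$, so Case~1 always applies.

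Your route is sound, and in fact reveals that the paper's Case~2 is vacuous: if every vertex with $v(A) \neq a_k$ saturates, then for any $M$ incompatible with $A$ and any vertex $v$, the vector $v - v[M \mapsto m']$ equals $v'' - v''[M \mapsto m']$ where $v'' = v[A \mapsto a_j]$ for some $j \neq k$ (both differences live only on contexts containing $M$, hence not $A$), and the right-hand side has $b$-value zero since both $v''$ and $v''[M \mapsto m']$ saturate. Thus no $M$ incompatible with $A$ effectively contributes, so by the paper's own Remark $b$ is equivalent to some $c_{0,a_k,a_k}$, contradicting the hypothesis. This short calculation is exactly the ``delicate step'' you anticipate; once written out, it replaces the entire Case~2 machinery. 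What the paper's approach buys is uniformity with the measurement-lifting proof (same affine-elimination template throughout), while yours buys a shorter, more conceptual argument for this particular theorem.
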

\begin{remark*}
    The condition that $b$ not be of the form $c_{0,a_k,a_k}$, i.e.~not be obtainable by case II measurement lifting when adding measurement $A$ with choice of outcome $a_k$, admits an equivalent formulation in terms of the properties of the inequality being lifted (or of the facet it defines).
    The case being excluded is when the measurement $A$ is compatible with all the measurements effectively contributing to the inequality $b \cdot p \geq 0$ \textit{and}, moreover, 
    all the vertices of $S$ that do not saturate the inequality assign outcome $a_k$ to $A$ (i.e. $b \cdot v = 0$ for all $v$ not assigning $a_k$ to $A$).
\end{remark*}
\begin{proof}
As in the proof for measurement lifting (Theorem~\ref{sec:theorem 2}),
the strategy is to use an affine basis for $S$ to define a set of vertices spanning $T$, from which one can affinely eliminate all but one vertex not saturating the lifted inequality.
We start by picking a set of affinely independent vertices of $S$,
\[B_S = \{v_1, \ldots, v_{d_s+1}\},\]
in such a way that the first $d_S$ vectors saturate the initial facet-defining inequality $b \cdot p \geq 0$.
Recall that the last vertex, $v_{d_s+1}$, may be chosen arbitrarily from among the vertices of $S$ not saturating the inequality.

We now form two sets of vertices of $T$ from $B_S$:
\begin{itemize}
    \item $V^*=\{v^*_1, \ldots, v^*_{d_S+1}\}$, where each $v^*_i$ is the vertex of $T$ corresponding to the same underlying global outcome assignment as $v_i$
    (n.b.~even though $v_i$ and $v^*_i$ have the same underlying outcome assignment, they are different vectors living in vector spaces with different dimensions: $v^*_i$ is obtained from $v_i$ by padding it with zeroes in the new components);
       \item $V' = \{v'_1, \ldots, v'_{d_S+1}\}$, where each $v'_{i}$ is the vertex of $T$ whose underlying assignment is obtained from that of $v_i$ by substituting the outcome $a_0$ for the outcome $a_k$ if the latter is the outcome that $v_i$ assigns to $A$
\end{itemize}
Observe that if $v_i$ assigns an outcome other than $a_k$ to $A$, then $v'_i = v^*_i$.

The union of these two sets
$B_T = V^* \cup V'$
affinely spans the whole polytope $T$, i.e.\ it is a (potentially over-complete) affine basis for $T$.
Both $v^*_i$ and $v'_i$
are mapped to $v_i$ under the coarse-graining $T \to S$, i.e.\
$(v^*_i)_{k \equiv 0} = (v_i')_{k\equiv 0} = v_i$.
Consequently, the last vertex in each of $V^*$ and $V'$ 
is the only one that does not saturate the inequality $(b_*+b_{a_k})\cdot q \geq 0$.

We divide the proof into two cases:
\begin{itemize}
 \item the non-saturating vertex $v_{d_S+1}$ in $B_{S}$ can be picked so that it does not assign outcome $a_k$ to $A$. 
 \item every vertex of $S$ not saturating the initial inequality assigns outcome $a_k$ to $A$. 
\end{itemize} 

The first case is quickly discharged.
We have $v^*_{d_S+1}=v'_{d_S+1}$, i.e.~the last vertex within $V^*$ and the last vertex within $V'$ are  exactly the same.
Hence, $B_T$ contains only one vertex not saturating the inequality.
It must therefore containt $d_{T}$ affinely independent points that saturate it, proving that it defines a facet of $T$.

In the second case, the last vertices of $V^*$ and $V'$ are different, differing only on the outcome each assigns to the measurement $A$, respectively $a_k$ and $a_0$.
We show that $v'_{d_S+1}$, the last element of $V'$, can be eliminated as an affine combination of the remaining vertices in $B_T$, using a similar approach as for measurement lifting.

By the remark below the theorem statement,
in this case, there exists a measurement $M$ which is incompatible with $A$ and which effectively contributes to the initial inequality.
Let $w$ be a vertex of $S$ obtained from $v_{d_S+1}$ by changing the outcome of $M$ to a different value.
It can be expanded in the affine basis $B_S$ as 
\begin{equation}
w \;=\; \sum_{i=1}^{d_S+1}\lambda_{i}v_{i},
\label{eq:OL:affineexpansion_w}
\end{equation}
with $\sum_{i=1}^{d_S+1}\lambda_{i} = 1$.
For the same reason discussed in Sec.~\ref{anomaly}, the vertex $w$ (more precisely, the outcome it assigns to $M$) may be chosen so that $\lambda_{d_S+1} \neq 1$.
In that case,
we would have
\begin{equation}
w \;=\; v_{d_S+1} + \sum_{i=1}^{d_S} \lambda_{i}v_{i},
\end{equation}
with $\sum_{i=1}^{d_S}\lambda_i = 0$, and applying $b$,
\begin{equation}
b \cdot w \;=\; b \cdot v_{d_S+1} + \sum_{i=1}^{d_S} b \cdot \lambda_{i}v_{i} \;=\; b \cdot v_{d_S+1},
\end{equation}
meaning that $b$ does not distinguish between the different outcomes that $v_{d_S+1}$ and $w$ assign to $M$.
This can be avoided because $M$ effectively contributes to the inequality.

Now, let $w^*$ and $w'$ be the vertices of $T$ built from $w$ like $v^*_i$ and $v'_i$ are built from $v_i$,
respectively by considering the same global outcome assignment and the one where $A$'s outcome is changed to $a_0$.
Analogously to Eq.~\eqref{eq:16} in the measurement lifting proof, we have the following equality
\begin{equation}
 w^{*}-v^{*}_{d_S+1}  = w' - v'_{d_S+1} . \label{eq:43}
\end{equation}
To see this explicitly,
divide the maximal contexts in $\CT$ into two mutually exclusive subsets -- those without $A$ and those with $A$ --
and consider the corresponding direct sum decomposition of the vector space into two summands.
The vectors $v^{*}_{d_S+1}$ and $v'_{d_S+1}$ (similarly, $w^{*}$ and $w'$) only differ in the second summand, as their underlying global assignments differ only for $A$.
On the other hand, since $M$ is incompatible with $A$, it only appears in maximal contexts not containing $A$,
and so $v^*_{d_S+1}$ and $w^*$ (similarly,  $v'_{d_S+1}$ and $w'$) only differ in the first summand.
In summary, we can write these four vectors as
\begin{align*}
 v^*_{d_S+1} &= [u_{m_0},u_{a_k}]
 &
 w^* &= [u_{m_1},u_{a_k}]
 \\
 v'_{d_S+1} &= [u_{m_0},u_{a_0}]
 &
 w' &= [u_{m_1},u_{a_0}]
\end{align*}
for some $u_{m_0}, u_{m_1}$ in the first summand and $u_{a_0}, u_{a_k}$ in the second, where the indices are suggestive of the differing outcomes for measurements $M$ and $A$. Eq.~\eqref{eq:43} is now evident from inspection.

Again as before, the vertices $w^*$ and $w'$ are affine combinations of those in $V^*$ and $V'$, respectively, with the same coefficients as in Eq.~\eqref{eq:OL:affineexpansion_w}:
\begin{align*}
 w^* &= \sum_{i=1}^{d_S+1}\lambda_{i}v^*_{i}, 
 &
 w' &= \sum_{i=1}^{d_S+1}\lambda_{i}v'_{i},
\end{align*}
Putting these expansions into Eq.~\eqref{eq:43} yields 
\begin{equation*}
     \sum_{i=1}^{d_S}\lambda_i v^*_{i} + (\lambda_{d_S+1}-1)v^{*}_{d_S+1} = \sum_{i=1}^{d_S}\lambda_i v'_{i} + (\lambda_{d_S+1}-1)v'_{d_S+1},
\end{equation*}
and given that $\lambda_{d_S+1} \neq 1$, 
\begin{equation*}
 v'_{d_S+1} = v^*_{d_S+1} - \sum_{i=1}^{d_S}\frac{\lambda_i v^{*}_{i}}{1-\lambda_{d_S+1}} + \sum_{i=1}^{d_S}\frac{\lambda_i v'_{i}}{1-\lambda_{d_S+1}}.
\end{equation*}
This exhibits $v'_{d_S+1}$ as an affine combination of the remaining elements of $B_T$,
showing it is redundant in $B_T$.
In other words, the set \ $B_T \setminus \{ v'_{d_S+1}\}$ still affinely spans $T$.
But it now contains only one vertex that does not saturate the inequality 
$(b_*  + b_{a_k}) \cdot q \geq 0$.
Necessarily, then, among $d_T+1$ affinely independent vertices in this set, $d_T$ of them must saturate the inequality, showing that it supports a facet of $T$.
\end{proof}

\section{Applications}
\label{sec:applications}

To demonstrate its power, we apply our lifting technique to extract facet-defining noncontextuality inequalities of scenarios
for which no such facet inequalities had been hitherto described.
To do so, we pick two well-known inequalities, the Clauser--Horne--Shimony--Holt (CHSH) and the Klyachko--Can--Binicio\u{g}lu--Shumovsky (KCBS) inequalities, as our starting points.
In Sec.~\ref{sec:App1}, we sequentially measurement lift the former to extract a facet-defining NC inequality for the scenario described by the anti-heptagon $\Bar{C}_7$, the complement of the $7$-cycle, with binary outcomes.
In Sec.~\ref{sec:App2}, we sequentially outcome lift the latter to obtain a facet-defining NC inequality for the 5-cycle scenario with three outcomes per measurement.


\subsection{Lifting the CHSH inequality to the \\two-outcome anti-heptagon scenario}
\label{sec:App1}


\begin{figure}
    \centering
    \includegraphics[scale = 0.70]{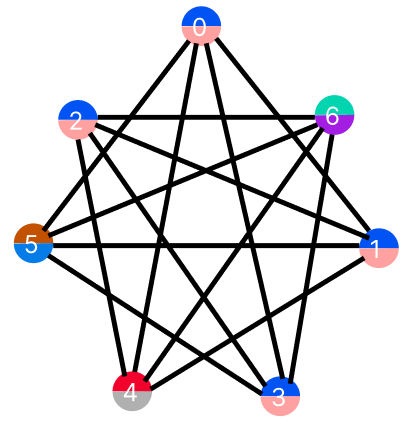}
    \caption{Comaptibility graph of the antiheptagon ($\Bar{C}_7$) scenario with bi-coloured nodes representing dichotomic measurements.}
    \label{fig: fig3}
\end{figure}


In our first example, we apply sequential measurement lifting to the CHSH inequality, defined on the $4$-cycle scenario,
and derive a facet-defining inequality for the scenario with compatibility graph $\Bar{C}_7$, the complement of the $7$-cycle, shown in Fig.~\ref{fig: fig3}.
All measurements are dichotomic in this example, taking values in the outcome set $\{0,1\}$.

We start from the sub-scenario given by the induced subgraph of $\Bar{C}_7$ with node set $\{0,1,2,3\}$ as per Fig.~\ref{fig: fig3}, a $4$-cycle.
The initial facet-defining inequality (CHSH) is 
\begin{multline}
p(00|01)+ p(11|01) + p(00|12) 
+ p(11|12)\\+ p(00|23)+ p(11|23)  
+ p(01|30)+ p(10|30) \leq 3. \label{eq:50}
\end{multline}
We now add measurement `4' to this sub-scenario, with the compatibilities induced from $\Bar{C}_7$.
This leads to a new scenario with maximal contexts $\{0,1,4\}$, $\{1,2,4\}$, $\{2,3\}$, $\{0,3\}$.
The way in which the new measurement is affixed to the initial scenario, namely the fact that it is not compatible with measurement `3', means that case I of measurement lifting applies.
Hence, we simply need to trace out the added measurement.
Inequality \eqref{eq:50} then turns out to be facet-defining for this scenario as well.
Note that the term $p(00|01)$ can be written as $p(000|014)+p(001|014)$ in terms of probabilities over maximal contexts of the extended scenario.
We then similarly add measurement `5'.
This yields a new scenario with maximal contexts $\{0,1,4\}$, $\{1,2,4\}$, $\{2,3\}$, $\{0,3,5\}$, $\{0,1,5\}$.
Again, case I of measurement lifting applies since the added `5' is incompatible with measurement `2',
and the inequality \eqref{eq:50} remains facet-defining for this scenario.
Finally, to get to the anti-heptagon, we add measurement `6', arriving at the scenario with compatibility graph $\Bar{C}_7$.
Its maximal contexts are $\{0,1,4\}$, $\{1,2,4\}$, $\{0,3,5\}$, $\{0,1,5\}$, $\{2,3,6\}$, $\{2,4,6\}$, $\{3,5,6\}$.
Once more, case I applies since the added `6' is incompatible with measurements `0' and `1',
and therefore inequality \eqref{eq:50}
is facet-defining for the anti-heptagon scenario.
Note that the expression in \eqref{eq:50} can of course be expanded in terms of probabilities over maximal contexts of the anti-heptagon scenario.
It is the first-ever identified facet-defining NC inequality for this scenario.  

Note that it is relatively straightforward to see that \eqref{eq:50} must be a valid inequality for the scenario $\Bar{C}_7$, as for any other scenario that contains the initial $4$-cycle as a sub-scenario (even if not an induced one).
The crucial point is that, in this case, it supports a facet and not a lower-dimensional face.


\subsection{Lifting the KCBS inequality to the three-outcome five-cycle scenario}
\label{sec:App2}


\begin{figure}
    \centering
    \includegraphics[scale = 0.70]{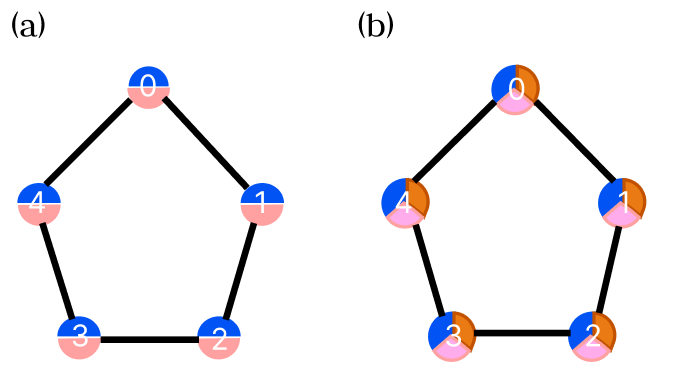}
    \caption{(a) Compatibility graph of the initial KCBS $5$-cycle scenario with five dichotomic measurements, represented by bi-coloured nodes.
    (b) Compatibility graph of the $5$-cycle scenario with trichotomic measurements, represented by tri-coloured nodes. Inequality \eqref{eq:57} is a facet-defining NC inequality for this extended scenario, obtained via sequential outcome lifting of the KCBS inequality.}
    \label{fig: fig4}
\end{figure}


In the second example, we apply sequential outcome lifting to the KCBS inequality, defined on the two-outcome $5$-cycle scenario,
and obtain a facet-defining NC inequality for the scenario with the same compatibility graph, $C_5$, but now with each measurement having three allowed outcomes;
see Fig.~\ref{fig: fig4}.

The starting facet-defining inequality (KCBS) is
\begin{equation}
p(01|01) + p(01|12) + p(01|23) + p(01|34) + p(01|40) \leq 2. \label{eq:54}
\end{equation}
We first add a new outcome, labelled 2, to measurement `0'.
We choose to club this new outcome with the pre-existing outcome labelled 1.
Then according to our lifting method, in the inequality \eqref{eq:54}, where outcome 1 for measurement `0' appears in the context $\{0,4\}$, the term $p(01|40)$ has to be replaced by $p(01|40) + p(02|40)$ to obtain the new facet-defining inequality:
\begin{multline}
p(01|01) + p(01|12) + p(01|23)  \\
+ p(01|34) + p(01|40) + \bm{p(02|40)} \leq 2. \label{eq:55}
\end{multline}
Likewise, we add a new outcome, labelled 2, to measurement `1', and club it with its old outcome labelled 1.
Tweaking the term $p(01|01)$ in the inequality \eqref{eq:55} as before, we obtain a new facet-defining inequality:
\begin{multline*}
p(01|01) + \bm{p(02|01)} + p(01|12) + p(01|23)  \\
+ p(01|34) + p(01|40) + p(02|40) \leq 2.
\end{multline*}
We continue like this by adding a new outcome, labelled 2, to the remaining measurements `2', `3', and `4', and clubbing it with the respective outcome labelled 1.
In the end, we arrive at the following facet-defining inequality for the trichotomic $5$-cycle scenario:
\begin{multline}
p(01|01) + \bm{p(02|01)} + p(01|12) + \bm{p(02|12)}  \\
+ p(01|23) + \bm{p(02|23)} +  p(01|34) + \bm{p(02|34)} \\ 
+ p(01|40) + \bm{p(02|40)} \leq 2, \label{eq:57}
\end{multline}
where the new terms not present in the initial inequality are highlighted.

There was an element of choice in the way the above lifting of the KCBS inequality was obtained.
We could just as well have chosen to club the new outcome with the outcome labelled 0 rather than that labelled 1. Indeed, we could have made a different choice for each measurement. Each such sequence of choices determines a different facet-defining inequality for the final scenario, each corresponding to a different coarse-graining of outcomes. Thus, our method yields not one but several new facet-defining inequalities for the three-outcome $5$-cycle scenario as sequential outcome liftings of the KCBS inequality.

\subsection{Further remarks}
\label{sec:AppDiscussion}

We conclude this section with some more general remarks about applying our lifting technique.

Any contextuality-witnessing scenario must contain an $n$-cycle (with $n \geq 4$) as an induced sub-scenario. This follows from Vorob{\textquotesingle}ev's theorem \cite{Vorobyev:1962}.
Combining this fact with the results of Ref.~\cite{PhysRevA.88.022118}, where the authors completely characterise the facets for dichotomic cycle scenarios, leads us to conclude that we can provide via lifting many facet-defining NC inequalities for any scenario that can witness contextuality.
Notice that, in the two examples detailed above, we did in fact lift two cycle inequalities, respectively CHSH and KCBS.

We also highlight an observation about applying our technique
to sequentially lift an inequality from an initial scenario $\scenS$ to a final scenario $\scenT$.
When measurement lifting through the addition of two measurements, say $A$ and $B$,
the lifted inequality might depend on whether one first adds $A$ and then $B$ or vice-versa.
This means that, in order to obtain all the facet-defining inequalities of $T$ that are lifted from $S$, one must consider all the possible orderings to reach scenario $\scenT$ by extending $\scenS$.


\section{Outlook}
\label{sec:discussion}


We have extended the method introduced by Pironio for lifting facet-defining Bell inequalities to arbitrary contextuality scenarios described by compatibility graphs.
The method allows us to produce facet-defining noncontextuality inequalities in all scenarios that admit contextual correlations.
For most of these scenarios, no such inequalities had been previously described.

Our work invites investigations in the spirit of Ref.~\cite{Pironio_2014} to identify new contextuality scenarios whose noncontextual polytope can be fully characterised via liftings, e.g.\ where every facet-defining NC inequality is obtained via lifting from some particularly simple class of noncontextuality inequalities such as cycle inequalities \cite{PhysRevA.88.022118}.
Complementarily, one may focus on identifying facet-defining NC inequalities that cannot be obtained via lifting from any sub-scenario.
As such, our work can be seen as a step towards complete characterisations of noncontextual polytopes beyond the few known scenarios \cite{PhysRevA.88.022118}.

Another aspect that is worth clarifying, in the wake of the comments in Sec.~\ref{sec:AppDiscussion}, is how the various forms of lifting compose. This would help avoid redundancy, trimming the search space for inequalities lifted from sub-scenarios.

Future applications of our method may include finding facet-defining inequalities in different kinds of scenarios such as Bell scenarios with sequential compatible measurements \cite{Cabello:2010PRL,Liu:2016PRL,PhysRevResearch.5.L012035}, extended Bell scenarios \cite{Extended_Bell:2023_PRL}, or scenarios incorporating causality \cite{fritz2016beyond2,gogioso2021sheaf,abramsky2024combining}.

Similarly, this work dealt only with scenarios where measurement compatibility is described by a binary relation, conforming with Specker's principle \cite{Cabello:2012XXX} in quantum mechanics; that is, a set of measurements is compatible (and thus forms a context) if it is pairwise so. However, the general theory of contextuality, as in e.g.\ Refs.~\cite{Abramsky:2011NJP,AcinCMP2015}, admits more general forms of compatibility, described by simplicial complexes (or hypergraphs).
One might ask how lifting works in that more general setup.

Moreover, it seems that the liftings from $\scenS$ to $\scenT$ considered in this work may be understood through an underlying classical procedure $\scenT \to \scenS$, a free operation in the resource theory of contextuality \cite{Abramsky:2017PRL,AmaralPRL2018,Barbosa2023Closing}
which maps correlations on $\scenT$ to correlations on $\scenS$ in a contextuality non-increasing fashion.
For outcome lifting, this is a coarse-graining operation; for case-I measurement lifting, it is restriction or marginalisation; for case-II measurement lifting, it appears to be a sub-normalised operation performing a form of post-selection.
There is scope to further explore and clarify this resource-theoretic perspective in future.


We conclude on a more practical note,
outlining some relevant potential applications for our lifting technique.
As mentioned, contextuality has been linked to quantum advantage in a number of information-processing tasks.
One prominent example is that of communication complexity protocols: it is shown in Ref.~\cite{gupta2023quantum} that a violation of a NC inequality witnesses quantum advantage in a suitably designed protocol.
New NC inequalities obtained by lifting may therefore capture quantum advantage in new, suitably related communication complexity protocols.
Similar analyses can be carried out
for other information-processing tasks whose success probability is given by a linear functional that determines a NC inequality.

Another aspect that can be explored for newly derived NC inequalities is self testing.
An inequality is self testing when a maximal quantum violation certifies a unique quantum realization (up to global isometries).
It has been studied both for state-dependent \cite{bharti2019robust} and state-independent \cite{xu2023state} inequalities.
An open question is whether the self-testing property is preserved under liftings.
Furthermore, Ref.~\cite{saha2019state} showed that any state-independent violation of a NC inequality offers quantum advantage in a specific, suitably designed distributed computing task. 
Deriving new such inequalities using our lifting method might therefore provide new distributed computing protocols exhibiting quantum advantage.


\section*{Acknowledgments}
We thank Som Kanjilal, Rafael Wagner, and Martti Karvonen for valuable comments and discussions.

The work of all three authors was supported by the Digital Horizon Europe project \href{https://doi.org/10.3030/101070558}{FoQaCiA}, \textit{Foundations of Quantum Computational Advantage}, GA no. 101070558.
R.C.\ and R.S.B.\ also acknowledge financial support from FCT -- Funda\c{c}\~ao para a Ci\^encia e a Tecnologia (Portugal) through PhD Grant SFRH/BD/151452/2021 (R.C.) and through CEECINST/00062/2018 (R.S.B.).
A.C.\ was also supported by the \href{https://doi.org/10.13039/501100011033}{MCINN/AEI} (Spain), Project No.\ PID2020-113738GB-I00.



\bibliography{references}


\end{document}